\newcommand{\mComment}[1]{\LComment{Comment: #1}}
\newtheorem{theorem}{Theorem}
\newtheorem{lemma}[theorem]{Lemma}
\newtheorem*{lemma*}{Lemma}
\newtheorem*{theorem*}{Theorem}
\newtheorem{definition}{Definition}
\theoremstyle{remark}
\newtheorem{note}{Note}
\title{Near Optimal Bounds for Replacement Paths and Related Problems in the CONGEST Model\footnote{Authors' affiliation: The University of Texas at Austin, USA; email:  {\tt vigneshm@cs.utexas.edu, vlr@cs.utexas.edu}. This work was supported in part by NSF grant CCF-2008241.}
\author {Vignesh Manoharan  and Vijaya Ramachandran}
}
\date{}
\begin{document}

\maketitle

\begin{abstract}
   We present several results in the CONGEST model on round complexity for  Replacement Paths (RPaths), Minimum Weight Cycle (MWC), and All Nodes Shortest Cycles (ANSC). We study these fundamental problems in both directed and undirected graphs, both weighted and unweighted. Many of our results are optimal to within a polylog factor: For an $n$-node graph $G$ we establish near linear lower and upper bounds for computing RPaths if $G$ is directed and weighted, and for computing MWC and ANSC if $G$ is weighted, directed or undirected; near $\sqrt{n}$ lower and upper bounds for undirected weighted RPaths; and $\Theta(D)$ bound for undirected unweighted RPaths. We also present lower and upper bounds for approximation versions of these problems, notably a $(2-(1/g))$-approximation algorithm for undirected unweighted MWC that runs in $\tilde{O}(\sqrt{n}+D)$ rounds, improving on the previous best bound of $\tilde{O}(\sqrt{ng}+D)$ rounds, where $g$ is the MWC length.
   We present a $(1+\epsilon)$-approximation algorithm for directed weighted RPaths, which beats the linear lower bound for exact RPaths.
   % and a $(1+\epsilon)$-approximation algorithm for directed weighted RPaths and $(2+ \epsilon)$-approximation for weighted undirected MWC, for any constant $\epsilon > 0$, that beat the round complexity lower bound for an exact solution. 
\end{abstract}

\section{Introduction}

Consider a communication network $G=(V,E)$, with two special nodes $s$ and $t$ in $V$, and with communication transmission from $s$ to $t$ along a shortest path $P_{st}$. 
In the distributed setting, it can be important to maintain efficient communication from $s$ to $t$ in the event that a link (i.e., edge) on this path $P_{st}$ fails. This is the Replacement Paths (RPaths) problem, where for each edge $e$ on $P_{st}$, we need to find a shortest path from $s$ to $t$ that avoids $e$. The RPaths problem has been extensively studied in the sequential setting~\cite{lawler1972procedure,katoh1982efficient,roditty2012replacement}. The closely related problem of finding a second simple shortest path (2-SiSP), i.e.,  a shortest simple path from $s$ to $t$ avoids at least one edge on the original shortest path $P_{st}$, has also been studied in the sequential setting~\cite{yen1971finding}. Another fundamental graph problem related to shortest paths is the Minimum Weight Cycle (MWC) problem, where we need to compute a shortest simple cycle in a given graph. The All Nodes Shortest Cycle (ANSC) problem, where we want to compute a shortest cycle through each node in a given graph, is also relevant to the distributed setting. 
Surprisingly, there are virtually no results known in the distributed CONGEST model for these problems. 
In this paper we address this lacuna by obtaining
strong round complexity bounds for these problems in the CONGEST model, which are optimal or near-optimal in many cases.

Let $|V|=n$, $|E|=m$.
Directed weighted RPaths, 2-SiSP, MWC, ANSC are important problems in sequential fine-grained complexity, being part of the $n^3$ time complexity class~\cite{williams2010subcubic} which contains many graph problems: weighted APSP, Negative Triangle Detection, Radius, Eccentricities and Betweenness Centrality~\cite{abboud2015subcubic}. Directed weighted MWC is the starting point of hardness in the fine-grained $mn$ complexity class~\cite{agarwal2018finegrained}, which contains directed weighted RPaths, 2-SiSP, ANSC. The $mn$ class is of more relevance in the CONGEST model where $O(\log n)$ bits of communication occur per round per edge in the graph, leading to $\tilde{O}(m)$\footnote{ We use the notation $\tilde{O}, \tilde{\Omega}, \tilde{\Theta}$ to hide poly-logarithmic factors.} communication per round, and the goal is to minimize the number of rounds in the computation.

To the best of our knowledge, the only previous work in the CONGEST model with a result for RPaths is a randomized algorithm for single-source fault-tolerant paths in undirected unweighted graphs~\cite{ghaffari2016fault}. We show optimal deterministic bounds for RPaths in this setting, and we obtain nontrivial, and in many cases near-optimal, results for RPaths in directed graphs and weighted graphs. 

Weighted directed RPaths and 2-SiSP are in the $mn$ and $n^3$ sequential fine-grained complexity classes as discussed above. But these two problems are unique among the problems in these sequential classes in that they become simpler for undirected graphs~\cite{katoh1982efficient} and for unweighted directed graphs~\cite{roditty2012replacement}. The results we present in this paper for RPaths and 2-SiSP for the CONGEST model show a similar trend, with the added contribution of an unconditional near linear lower bound for the weighted directed case along with sublinear round algorithms for unweighted and undirected graphs (as long as the network diameter and length of $P_{st}$ are sublinear). Thus we show a proven separation in complexity for RPaths and 2-SiSP between weighted directed graphs and either unweighted directed graphs or undirected graphs (even if weighted) in the CONGEST model.

Although RPaths and 2-SiSP have not been extensively studied in the CONGEST model, other related and more recently defined problems have been studied: CONGEST algorithms for fault-tolerant distance preservers~\cite{bodwin2023restorable,ghaffari2016fault}, which are sparse subgraphs on the network in which replacement path distances are exactly preserved, and fault-tolerant spanners~\cite{parter2022vertex,dinitz2020spanner}, which are sparse subgraphs in which replacement path distances are approximations of the distances in the original network have been reported (see Section~\ref{sec:prior} for details). The techniques in these results do not readily give efficient algorithms to explicitly compute replacement path weights or to construct a replacement path when a failed edge is known; further, these results mainly deal with undirected unweighted graphs. 

We also present results for computing MWC and ANSC in the CONGEST model. We present near-linear upper and lower bounds for the exact computation of MWC in directed graphs and weighted graphs.
The only variant of MWC previously studied in the CONGEST was in undirected unweighted graphs, where MWC is known as girth. The current best upper and lower bounds for exact computation in the CONGEST model are $O(n)$~\cite{holzer2012apsp} and $\tilde{\Omega}(\sqrt{n})$~\cite{frischknecht2012} respectively. For 2-approximation the previous best upper bound was $\tilde{O}(\sqrt{ng}+D)$~\cite{peleg2013girth} ($g$ is the weight of MWC), which we improve to $\tilde{O}(\sqrt{n}+D)$. We also present an algorithm for $(2+\epsilon)$-approximation of MWC in undirected graphs, and we substantially improve on this results in a recent paper~\cite{mwcarxiv} which studies the MWC problem in the CONGEST model in more detail. Additionally in that paper, we present lower bounds for arbitrary constant approximation and new sublinear algorithms for directed MWC. 

An extended abstract of our results on RPaths and 2-SiSP presented here appears in~\cite{rp2024}. Some of the results on MWC are included in~\cite{mwc2024}, which also has other results.

\subsection{Preliminaries}
\label{sec:prelim}
\noindent
\textbf{The CONGEST Model.}\label{congest-model}
In the CONGEST model~\cite{peleg2000distributed}, a communication network is represented by a graph $G=(V,E)$ where nodes model processors and edges model bounded-bandwidth communication links between processors. Each node has a unique identifier in  $\{0, 1, \dots n-1\}$ where $n = |V|$, and each node only knows the identifiers of itself and its neighbors in the network. Each node has infinite computational power. The nodes perform computation in synchronous rounds, where each node can send a message of up to $\Theta(\log n)$ bits to each neighbor and can receive the messages sent to it by its neighbors. The complexity of an algorithm is measured by the number of rounds until the algorithm terminates. 

We consider both weighted and unweighted graphs $G$ in this paper, where in the weighted case each edge has an integer weight which is known to the vertices incident to the edge. The graph $G$ can be directed or undirected. Following the convention for CONGEST algorithms~\cite{chechiksssp,nanongkai2014approx,ghaffari2015reach,agarwal2020deterministic,ancona2020}, the communication links are always bi-directional and unweighted.

\noindent
\textbf{Notation.}
Let $G=(V,E)$ be a directed or undirected graph with $|V|=n$ and $|E|=m$. Let edge $(u,v)$ have non-negative integer weight $w(u,v)$ according to a weight assignment function $w: E \rightarrow \{0,1, \dots W\}$, where $W = poly(n)$. Let $\delta_{st}$ denote the weight of a shortest path $P_{st}$ from $s$ to $t$ and $h_{st}$ denote the number of edges (hop distance) on this shortest path. The undirected diameter $D$ is the maximum shortest path distance between any two vertices in the underlying undirected unweighted graph of $G$. 

\begin{note}
\label{note:sssp}
    We use $SSSP$ and $APSP$ to denote the round complexity in the CONGEST model for weighted single source shortest paths {\rm (SSSP)} and weighted all pairs shortest paths {\rm (APSP)} respectively. The current best algorithm for weighted {\rm APSP} runs in $\tilde{O}(n)$ rounds, randomized~\cite{bernsteinapsp}. For weighted {\rm SSSP}, recent results~\cite{rozhon22sssp,cao2023sssp} provide an $\tilde{O}(\sqrt{n} + n^{2/5+o(1)}D^{2/5}+D)$ round randomized algorithm. The current best lower bounds are $\Omega\left(\frac{\sqrt{n}}{\log n} + D\right)$ for weighted {\rm SSSP}~\cite{peleg2000mst} and $\Omega\left(\frac{n}{\log n}\right)$ for (weighted and unweighted) {\rm APSP}~\cite{nanongkai2014approx}.
\end{note}

We now define the problems we consider in this paper.
\begin{definition}\label{def:rpaths}
    \textbf{Replacement Paths (RPaths)} : Given an $n$-node graph $G$, two vertices $s,t$ and a shortest path $P_{st}$ from $s$ to $t$, for each edge $e \in P_{st}$, compute the weight $d(s,t,e)$ of a shortest simple path $P_e$ from $s$ to $t$ that does not contain $e$.

    \textbf{Second Simple Shortest Path (2-SiSP)} : Given an $n$-node graph $G$, two vertices $s,t$ and a shortest path $P_{st}$ from $s$ to $t$, compute the weight $d_2(s,t)$ of a shortest simple path $P_2$ from $s$ to $t$ that differs from $P_{st}$.

    \textbf{Minimum Weight Cycle (MWC)} : Given an $n$-node graph $G$, compute the weight of a minimum weight simple cycle in $G$. For an unweighted graph, this is called the {\it girth} and is the length of a minimum hop cycle but we will continue to call it MWC.

    \textbf{All Nodes Shortest Cycle (ANSC)} : Given an $n$-node graph $G$, for each vertex $v$, compute the weight of a minimum weight simple cycle in $G$ that passes through $v$.
\end{definition}

Our lower bounds apply even when only one node in the graph is required to know the weights to be computed. In our RPaths or 2-SiSP algorithms, all vertices can learn the distances $d(s,t,e)$ or $d_2(s,t)$ using a simple broadcast in $O(h_{st}+D)$ rounds, which is within the round complexity bounds. In our MWC and ANSC algorithms, broadcasting takes $O(D)$ and $O(n+D)$ rounds respectively which is dominated by the rounds to compute weights.

In our RPaths and 2-SiSP results, we assume that the shortest path $P_{st}$ between the vertices $s,t$ is part of the input and that each vertex in the network knows the identities of $s$ and $t$, and the identities of vertices on $P_{st}$. The round bounds of our algorithms are unchanged if we are required to compute $P_{st}$ using known CONGEST algorithms for SSSP and broadcast the necessary information in $O(h_{st}+D)$ rounds.
Also, for computing 2-SiSP weight, note that once we have the weights of the $h_{st}$ replacement paths for $P_{st}$ we can compute the minimum among them in additional $O(D)$ rounds with a convergecast. Similarly, MWC can be computed in $O(D)$ rounds once ANSC values have been computed with a convergecast.

\subsection{Our Results}
\begin{table}
    \caption{{\it Exact Weight/Size results.} $SSSP$ and $APSP$ refer to the round complexity of weighted SSSP and APSP (See Note~\ref{note:sssp}). $^\dagger$Denotes deterministic results, all other results are randomized}
    \label{tab:results}
    \begin{minipage}{\columnwidth}
        \begin{center}
        \begin{tabular}{|c| c | c |}\hline
            \textbf{Problem} & \textbf{Lower Bound} & \textbf{Upper Bound} \\ \hline \hline
            & \multicolumn{2}{c|}{ \textbf{Directed Weighted Graphs}} \\ \hline
            % \midrule 
            \textit{RPaths} & $\Omega(APSP) = \Omega\left(\frac{n}{\log n}\right)$ [Thm \ref{thm:dirrp}.A] & $O(APSP) = \tilde{O}(n)$ [Thm \ref{thm:dirrp}.B]  \\ \hline
            \textit{MWC, ANSC} & $\Omega\left(\frac{n}{\log n}\right)$  [Thm \ref{thm:dirrp}.A] & $\tilde{O}(n)$ \cite{bernsteinapsp} \\ \hline
            & \multicolumn{2}{c|}{\textbf{Directed Unweighted Graphs}} \\ \hline
            \textit{RPaths} &  $\Omega\left(\frac{\sqrt{n}}{\log n} + D\right)$ [Thm \ref{thm:dirunwrp}.A] & $\tilde{O}(\min(n^{2/3} + \sqrt{nh_{st}} + D, SSSP \cdot h_{st}))$ [Thm \ref{thm:dirunwrp}.B] \\ \hline
            \textit{MWC, ANSC} &  $\Omega\left(\frac{n}{\log n}\right)$ [Thm \ref{thm:dirrp}.A] & $O(n)^\dagger$ \cite{holzer2012apsp} \\ \hline
            & \multicolumn{2}{c|}{ \textbf{Undirected Weighted Graphs}} \\ \hline
            \textit{RPaths} & $\Omega(SSSP) = \Omega\left(\frac{\sqrt{n}}{\log n} + D\right)$ [Thm \ref{thm:undirrp}.A] & $O(SSSP + h_{st}) = O(\sqrt{n}D^{1/4} + h_{st} + D)$ [Thm \ref{thm:undirrp}.B] \\ \hline
            \textit{MWC, ANSC} &  $\Omega\left(\frac{n}{\log n}\right)$ [Thm \ref{thm:undirmwc}.A] & $\tilde{O}(n)$  [Thm \ref{thm:undirmwc}.B] \\ \hline
            & \multicolumn{2}{c|}{\textbf{Undirected Unweighted Graphs}} \\ \hline
            \textit{RPaths} & $\Omega\left(D\right)$ [Thm \ref{thm:dirunwrp}.A] & $O(D)^\dagger$ [Thm \ref{thm:dirunwrp}.B] \\ \hline
            \textit{MWC, ANSC} &  $\Omega\left(\frac{\sqrt{n}}{\log n}\right)$ (adapted from~\cite{korhonen2017,drucker2014}) & $O(n)^\dagger$  \cite{holzer2012apsp} \\ \hline
            % \bottomrule
        \end{tabular}
    \end{center}
    \end{minipage}
\end{table}

The upper and lower bounds proved in this paper are tabulated in Table~\ref{tab:results}. We use the notation $\tilde{O}, \tilde{\Omega}, \tilde{\Theta}$ to hide poly-logarithmic factors. We also introduce the notation $\tilde{o}$, where $\tilde{o}(f(n))$ denotes the class of functions that are asymptotically smaller than $f(n)$ by at least a polynomial factor\footnote{Our results apply if we alternatively define $\tilde{o}(f(n))$ as the class of functions asymptotically smaller than $f(n)$ by a factor of $\omega(\log^k n)$ for any constant $k>1$.}. 

We also consider approximation versions of the problems defined above, these results are tabulated in Table~\ref{tab:approxresults}. The $\alpha$-approximation (for some $\alpha > 1$) for the RPaths problem is to compute a replacement path for each edge $e$ whose weight is within an $\alpha$ factor of the shortest replacement path for $e$, and the other approximation problems are defined similarly. We also use the notation $(1+\epsilon)$-approximation where $\alpha = (1+\epsilon)$ for an arbitrarily small constant $\epsilon > 0$. 

Our algorithms compute the weight/size of the replacement path or minimum weight cycle, and we also show how they can be modified to construct the actual replacement paths or cycle. The construction can be done using routing tables at each vertex. In the case of undirected graphs, we present an additional on-the-fly method where the replacement path or cycle is quickly constructed when the failing edge is discovered without storing the entire routing table.
This is discussed in detail in Section~\ref{sec:recon}. We also consider the problem of \textit{$q$-Cycle Detection}: given an unweighted graph $G$, some vertex must report a cycle if $G$ contains a simple cycle of length $q$. If $G$ has no simple cycle of length $q$, no vertex should report a cycle. We now present our results.

\begin{table}
    \caption{{\it Approximate Weight/Size results.}
    The results hold for an arbitrarily large constant $\alpha > 1$ and arbitrarily small constant $\epsilon > 0$. All results are randomized.}
    \label{tab:approxresults}
    \begin{minipage}{\columnwidth}
        \begin{center}
        \begin{tabular}{|c| c | >{\centering\arraybackslash}p{8cm} |}\hline
            \textbf{Problem} & \textbf{Lower Bound} & \textbf{Upper Bound} \\ \hline \hline
            & \multicolumn{2}{c|}{ \textbf{Directed Weighted Graphs}} \\ \hline
            \textit{RPaths} &  1, $\Omega\left(\frac{n}{\log n}\right)$ & $(1+\epsilon)$,  $\tilde{O}(n^{2/3} + \sqrt{nh_{st}} + D)$ [Thm \ref{thm:dirrp}.C]  \\ \hline
            \textit{MWC, ANSC} & $(2-\epsilon), \Omega\left(\frac{n}{\log n}\right)$  & 1, $\tilde{O}(n)$ \\ \hline
            & \multicolumn{2}{c|}{\textbf{Directed Unweighted Graphs}} \\ \hline
            \textit{RPaths} &  $\alpha, \Omega\left(\frac{\sqrt{n}}{\log n} + D\right)$ [Thm \ref{thm:dirrp}.A] & 1, $\tilde{O}(n^{2/3} + \sqrt{nh_{st}} + D)$  \\ \hline
            \textit{MWC, ANSC} &  $(2-\epsilon), \Omega\left(\frac{n}{\log n}\right)$ [Thm \ref{thm:dirrp}.A] & 1, ${O}(n)$  \\ \hline
            & \multicolumn{2}{c|}{ \textbf{Undirected Weighted Graphs}} \\ \hline
            \textit{RPaths} & $\alpha, \Omega\left(\frac{\sqrt{n}}{\log n} + D\right)$ [Thm \ref{thm:dirunwrp}.A] & 1, $O(SSSP + h_{st})$ \\ \hline
            \textit{MWC, ANSC} &  $(2-\epsilon), \Omega\left(\frac{n}{\log n}\right)$ [Thm \ref{thm:undirmwc}.A] & $(2+\epsilon)$, $\tilde{O}\big(\min(n^{3/4}D^{1/4} + n^{1/4}D, $ $n^{3/4} + n^{0.65}D^{2/5} + n^{1/4}D, n)\big)$ [Thm \ref{thm:undirmwc}.D] \\ \hline
            & \multicolumn{2}{c|}{\textbf{Undirected Unweighted Graphs}} \\ \hline
            \textit{RPaths} & 1, $\Omega\left(D\right)$ & 1, $O(D)$ \\ \hline
            \textit{MWC} &  $(2-\epsilon), \Omega\left(\frac{\sqrt{n}}{\log n}\right)$ \cite{frischknecht2012} & $(2-(1/g))$,$\tilde{O}(\sqrt{n} + D)$ [Thm \ref{thm:undirmwc}.C] \\ \hline
        \end{tabular}
    \end{center}
    \end{minipage}
\end{table}

\vspace{0.5em}
\noindent
\textbf{Directed Weighted Graphs.}
For an $n$-node directed weighted graph, we present an RPaths CONGEST algorithm that runs in near-linear $\tilde{O}(n)$ rounds (Section~\ref{sec:dirrpub}). The classic sequential $\tilde{O}(mn)$-time algorithm for 2-SiSP and RPaths~\cite{yen1971finding} performs a sequence of $h_{st}$ SSSP computations, and a near-linear bound is not achievable on the CONGEST model through implementing this algorithm. Instead, we formulate RPaths as an APSP computation (on an alternate graph) that can be efficiently computed within the APSP bound $\tilde{O}(n)$ in the CONGEST model. We show that our algorithm is nearly optimal by presenting an $\tilde{\Omega}(n)$ lower bound (even when the undirected diameter $D$ is a constant) for both RPaths and 2-SiSP through a reduction from set disjointness (Section~\ref{sec:dirrplb}). 

Our lower bound proof for RPaths is much more involved than the $\tilde{\Omega}(n)$ APSP lower bound in~\cite{nanongkai2014approx}, and we also 
show that RPaths differs from APSP in efficient approximability: the $\tilde{\Omega}(n)$ APSP lower bound in~\cite{nanongkai2014approx} applies to $\alpha$-approximation for any constant $\alpha > 1$,  but for weighted directed RPaths we give in Section~\ref{sec:approxdirrpub} an asymptotically improved algorithm for $(1 + \epsilon)$-approximation (for any constant $\epsilon > 0$) that runs in sublinear rounds ($\tilde{O}(n^{1-c})$ for constant $c > 0$) whenever both $h_{st}$ and $D$ are sublinear.

\begin{theorem}
    \label{thm:dirrp}
    Given a directed weighted graph $G$ on $n$ vertices with undirected diameter $D$ and a shortest path $P_{st}$ of hop length $h_{st}$,
    \begin{enumerate}[label=\Alph*.,ref=\Alph*]
        \item \label{thm:dirrp:lb} Any randomized algorithm that computes RPaths or 2-SiSP for $P_{st}$ requires $\Omega\left(\frac{n}{\log n}\right)$ rounds, even if $D$ is constant.
        \item \label{thm:dirrp:ub}  RPaths and 2-SiSP for $P_{st}$ can be computed in $O(APSP)$ rounds, and hence by a randomized algorithm in $\tilde{O}(n)$ rounds.
        \item \label{thm:dirrp:approxub} There is a randomized algorithm that computes a $(1+ \epsilon)$-approximation of RPaths and 2-SiSP in $\tilde{O}(\sqrt{nh_{st}} + D + \min(n^{2/3}, h_{st}^{2/5}n^{2/5+o(1)}D^{2/5}))$ rounds, for any constant $\epsilon >0$.
        \item \label{thm:dirrp:approxlb}  Computing an $\alpha$-approximation of RPaths or 2-SiSP for any constant $\alpha >1$ requires $\Omega\left(\frac{\sqrt{n}}{\log n} + D\right)$ rounds.
    \end{enumerate}
\end{theorem}

In directed graphs (weighted or unweighted), we can compute exact values of all nodes shortest cycle (ANSC) in the CONGEST model by first computing APSP~\cite{holzer2012apsp,bernsteinapsp} and then computing at each vertex $v \in V$ the minimum among cycles formed by concatenating a $v$-$u$ shortest path and a single edge $(u,v)$ (for each in-neighbor $u$ of $v$). Minimum weight cycle (MWC) can be computed after computing ANSC in $O(D)$ additional rounds by a convergecast operation computing the minimum among cycles at each vertex. Since we can compute APSP in $\tilde{O}(n)$ rounds in directed weighted graphs~\cite{bernsteinapsp} and $O(n)$ time in directed unweighted graphs~\cite{holzer2012apsp}, we obtain CONGEST algorithms for exact ANSC and MWC with these same round complexities (see Section~\ref{sec:mwcub}). Matching these upper bounds, we present near linear lower bounds for MWC and ANSC in directed graphs (Section~\ref{sec:dirmwclb}). We prove these lower bounds using reductions from set disjointness, and the lower bounds apply even to $(2-\epsilon)$-approximation.
In a recent paper~\cite{mwcarxiv}, we present $\tilde{\Omega}(\sqrt{n})$ lower bounds for arbitrarily large constant approximation of directed MWC, and sublinear $\tilde{O}(n^{4/5}+D)$ algorithms for $(2+\epsilon)$-approximation of MWC (2-approximation in unweighted graphs).

\begin{theorem}
    \label{thm:dirmwc}
    Given a directed weighted~(or unweighted) graph $G$ on $n$ vertices with undirected diameter $D$, any algorithm that computes MWC or ANSC requires $\Omega\left(\frac{n}{\log n}\right)$ rounds, even if $D$ is constant. This lower bound also applies to any $(2-\epsilon)$-approximation algorithm for MWC. 
\end{theorem}

\vspace{0.5em}
\noindent
\textbf{Directed Unweighted Graphs.}
In the case of directed unweighted graphs, the near linear lower bound for the weighted case no longer applies. We give an algorithm based on sampling and computing detours that takes $\tilde{O}(n^{2/3} + \sqrt{nh_{st}} + D)$ rounds. This gives us an algorithm that runs in sublinear rounds whenever both $h_{st}$ and $D$ are sublinear. We also have a simple algorithm taking $O(h_{st} \cdot SSSP)$ rounds that is more efficient when $h_{st}$ is small (Section~\ref{sec:dirunwrpub}). We show a lower bound of ${\Omega}(SSSP) = \tilde{\Omega}(\sqrt{n}+D)$ for computing RPaths and 2-SiSP on directed unweighted graphs (Section~\ref{sec:dirunwrplb}), and our algorithm matches this $O(SSSP)$ bound when $h_{st}$ is $O(1)$. This $\tilde{\Omega}(\sqrt{n}+D)$  lower bound shows that computing RPaths is harder in directed unweighted graphs than in undirected unweighted graphs, where we have an $O(D)$ round algorithm (see below).

\begin{theorem}
    \label{thm:dirunwrp}
    Given a directed unweighted graph $G$ on $n$ vertices with undirected diameter $D$ and a shortest path $P_{st}$ of hop length $h_{st}$,
    \begin{enumerate}[label=\Alph*.,ref=\Alph*]
        \item \label{thm:dirunwrp:lb}Any randomized algorithm that computes RPaths or 2-SiSP requires $\Omega\left(\frac{\sqrt{n}}{\log n} + D\right)$ rounds, even if $h_{st}$ and $D$ are as small as $\Theta(\log n)$. These lower bounds also apply to an $\alpha$-approximation, for any constant $\alpha >1$.
        \item \label{thm:dirunwrp:ub}There is a randomized algorithm that computes RPaths and 2-SiSP for $P_{st}$ in $\tilde{O}(\min(n^{2/3} + \sqrt{nh_{st}} + D, SSSP \cdot h_{st}))$ rounds.
    \end{enumerate}
\end{theorem}

We adapt our lower bound proof for directed unweighted RPaths to other basic directed graph problems such as $s$-$t$ reachability and $s$-$t$ shortest path in directed unweighted graphs. A folklore lower bound of $\tilde{\Omega}(\sqrt{n} + D)$ for these problems was attributed by~\cite{ghaffari2015reach} to {\it undirected} lower bound results in~\cite{sarma2012distributed}, and we give explicit proofs here. A lower bound of $\tilde{\Omega}(\sqrt{n} + D)$ for undirected weighted SSSP was known~\cite{elkin2006mst,sarma2012distributed} (which also applies to directed weighted SSSP) but this does not apply to unweighted directed graphs. These problems are easier in undirected unweighted graphs since undirected BFS can be performed in $O(D)$ rounds.
Thus our results indicate that basic problems in directed graphs are asymptotically harder than their undirected unweighted counterparts.
We note that this difference is not very surprising since the underlying communication network is the undirected version of the graph regardless of whether the graph is directed or undirected.

We also prove a $\Omega(n/\log n)$ lower bound for directed fixed-length cycle detection for length $q\geq 4$. This significantly differs from undirected fixed-length cycle detection, where we only have a lower bound of $\Omega(\sqrt{n}/\log n)$ and we have a $\tilde{O} (\sqrt n)$ upper bound for detecting an undirected 4-cycle. These connections are discussed further in Section~\ref{sec:cycledet}.

\begin{theorem}
    \label{thm:dirunwother}
    Given a directed unweighted graph $G$ on $n$ vertices,
    \begin{enumerate}[label=\Alph*.,ref=\Alph*]
        \item \label{thm:dirunwother:st}Any CONGEST algorithm that checks $s$-$t$ reachability, or computes $s$-$t$ shortest path in a directed unweighted graph requires $\Omega\left(\frac{\sqrt{n}}{\log n} + D\right)$ rounds.
        \item \label{thm:dirunwother:cycledet}Any CONGEST algorithm that detects a directed cycle of length $q$ (for any constant $q \ge 4$) in a directed unweighted graph on $n$ vertices requires $\Omega(\frac{n}{\log n})$ rounds.
    \end{enumerate}
\end{theorem}

\vspace{0.5em}
\noindent
\textbf{Undirected Graphs}.
For undirected graphs, our upper and lower bounds for RPaths match the round complexity of SSSP (BFS for unweighted) in the CONGEST model, except for weighted RPaths which requires an additional $O(h_{st})$ rounds. 
The remaining gap between our upper and lower bounds is inherited from the gap between the current best bounds for SSSP. 

\begin{theorem}\label{thm:undirrp}
    Given an undirected weighted~(or unweighted) graph $G$ on $n$ vertices with diameter $D$ and a shortest path $P_{st}$ of hop length $h_{st}$,
    \begin{enumerate}[label=\Alph*.,ref=\Alph*]
        \item\label{thm:undirrp:lb} Any algorithm that computes RPaths or 2-SiSP requires:
            \begin{enumerate}[label=\roman*.]
                \item $\Omega(SSSP) = \Omega\left(\frac{\sqrt{n}}{\log n} + D\right)$ rounds if $G$ is weighted, even if $h_{st}$ is constant. This lower bound applies to $\alpha$-approximation, for any $\alpha >1$.
                \item $\Omega\left(D\right)$ rounds if $G$ is unweighted, which is a tight bound.
            \end{enumerate}
        \item\label{thm:undirrp:ub} We can compute RPaths for $P_{st}$ in $O(SSSP + h_{st}) =  \tilde{O}(\sqrt{n} + n^{2/5+o(1)}D^{2/5}+D + h_{st})$ rounds. For 2-SiSP the bound is $O(SSSP)$. If $G$ is unweighted, the bound is $O(D)$ rounds.
    \end{enumerate}.
\end{theorem}

For undirected MWC, we prove a near linear lower bound for weighted graphs, giving us tight upper and lower bounds (Section~\ref{sec:mwclb}). For undirected unweighted graphs, we adapt the lower bound in~\cite{korhonen2017,drucker2014} for detecting copies of 6-cycles in a graph to give a $\Omega(\sqrt{n}/\log n)$ lower bound for MWC --- this lower bound also applies to any $(2-\epsilon)$ approximation (matching the result in~\cite{frischknecht2012}). We also give a $(2- (1/g))$-approximation algorithm for  undirected unweighted MWC running in $\tilde{O}(\sqrt{n}+D)$ rounds (Section~\ref{sec:approxundirmwcub}), which significantly improves on the $(2- (1/g))$-approximation algorithm of~\cite{peleg2013girth} which has round complexity $\tilde{O}(\sqrt{ng}+D)$; here $g$ is the length of the MWC. We also give a $(2+\epsilon)$-approximation algorithm for undirected weighted MWC, which has sublinear round complexity when $D$ is $\tilde{o}(n^{3/4})$. Our recent paper~\cite{mwcarxiv} presents additional results for approximation, including a lower bound of $\tilde{\Omega}(\sqrt{n})$ for arbitrarily large constant approximation, and a sublinear $\tilde{O}(n^{2/3}+D)$ algorithm for $(2+\epsilon)$-approximate undirected weighted MWC that significantly improves on the result presented in this paper.

\begin{theorem}
    \label{thm:undirmwc}
    Given a directed unweighted graph $G$ on $n$ vertices,
    \begin{enumerate}[label=\Alph*.,ref=\Alph*]
        \item \label{thm:undirmwc:lb} Any algorithm that computes the MWC or ANSC requires $\Omega\left(\frac{n}{\log n}\right)$ rounds, even if $D$ is constant. This lower bound also applies to any $(2-\epsilon)$-approximation algorithm for MWC.
        \item \label{thm:undirmwc:exact} There is an algorithm that solves MWC and ANSC in $O(APSP + n)= \tilde{O}(n)$ rounds. If $G$ is unweighted, there is an algorithm that takes $O(n)$ rounds.
        \item\footnote[2]{Detailed pseudocode and analysis of this algorithm is presented in our recent paper~\cite{mwcarxiv}. } \label{thm:undirmwc:approxunw} There is an algorithm that computes a $(2- (1/g))$-approximation of the MWC in an undirected unweighted graph in $\tilde{O}(\sqrt{n}+D)$ rounds.
        \item\footnote[3]{A significantly more efficient algorithm that takes $\tilde{O}(n^{2/3}+D)$ rounds is presented in our recent paper~\cite{mwcarxiv}. } \label{thm:undirmwc:approx} There is an algorithm that computes a $(2+\epsilon)$-approximation of the MWC in an undirected weighted graph in $\tilde{O}\left(\min(n^{3/4}D^{1/4} + n^{1/4}D, n^{3/4} + n^{0.65}D^{2/5} + n^{1/4}D, n)\right)$ rounds.
    \end{enumerate}
\end{theorem}

Our results indicate that  replacement paths can be computed faster than minimum weight cycle in many settings, and
RPaths is never harder to compute than MWC in the CONGEST model: 
Both problems have the same almost tight linear round bound for directed weighted graphs, but RPaths can be computed in $\tilde{o}(n)$ rounds  (when $h_{st}$ and $D$ are $\tilde{o}(n)$) while MWC has a lower bound of $\tilde{\Omega}(n)$ in both directed unweighted graphs and undirected weighted graphs. In undirected unweighted graphs, we have a tight $O(D)$ upper bound for RPaths while MWC has a lower bound of $\tilde{\Omega}(\sqrt{n})$.

\subsection{Prior Work}
\label{sec:prior}
RPaths and the closely related 2-SiSP are well-studied problems in the sequential setting. For weighted directed graphs the classical algorithm of Yen~\cite{yen1971finding} runs in~$\tilde{O}(mn)$ time and has a matching fine-grained lower bound of $\tilde{\Omega}(mn)$ assuming a sequential hardness result for MWC~\cite{agarwal2018finegrained}. For unweighted directed graphs, a randomized $\tilde{O}(m\sqrt{n})$ algorithm is given in~\cite{roditty2012replacement}. For undirected graphs, a near-linear time algorithm is given in~\cite{katoh1982efficient}, matching the running time for sequential SSSP. Our bounds for RPaths for these different graph classes in the CONGEST model follow a similar pattern: close to APSP for weighted directed graphs, close to SSSP for undirected graphs, and intermediate bounds for directed unweighted graphs. The more general problem of single source replacement paths (SSRP) was studied in the sequential setting in~\cite{chechik2020ssrp,chechik2019ssrp}.

The problem of computing MWC has been extensively studied in the sequential setting. MWC can be computed by computing All Pairs Shortest Paths (APSP) in the given graph in $O(n^3)$ time and in $\tilde{O}(mn)$ time. The hardness of computing MWC in the sequential fine-grained setting was shown by~\cite{williams2010subcubic} for the $n^3$ class and MWC $mn$-hardness is the hypothesis used for establishing hardness for the $mn$ class~\cite{agarwal2018finegrained}. For additional prior work on sequential MWC, see~\cite{mwcarxiv}.

In the distributed setting, an $O(D\log n)$ algorithm for single source replacement paths in undirected unweighted graphs was given in~\cite{ghaffari2016fault}. We are not aware of any prior results in the CONGEST model for RPaths or 2-SiSP for directed graphs or for weighted graphs. Distributed constructions of fault-tolerant preservers, which construct a sparse subgraph that exactly preserves replacement path distances, have been studied in~\cite{bodwin2023restorable,parter2020fault} but their constructions do not give an efficient procedure to compute replacement path distances. Fault-tolerant spanners, which construct a sparse subgraph that approximates replacement path distances, have also been studied in CONGEST~\cite{dinitz2020spanner,parter2022vertex}. 

An $O(n)$ algorithm for computing girth was given in~\cite{holzer2012apsp}, and a $\tilde{\Omega}(\sqrt{n})$ lower bound for computing girth was given in \cite{frischknecht2012} which applies to any $(2-\epsilon)$-approximation algorithm. An $\tilde{O}(\sqrt{ng}+D)$-round algorithm was given in~\cite{peleg2013girth} to compute $(2-\frac{1}{g})$-approximation of girth(where $g$ is the girth).  Cycle detection has been studied with both upper and lower bounds for undirected graphs~\cite{fraigniaud2024evencycle,eden2021sublinear,censorhillel2020girth,drucker2014}. Tight bounds of $\tilde{\Theta}(n^{1/3})$ are known for triangle detection in undirected and directed graphs~\cite{chang2021near,Pettie2022,izumi2017triangle}. 

There are unconditional $\tilde{\Omega}(n)$ CONGEST lower bounds for several graph problems in the sequential $n^3$ and $mn$ fine-grained complexity class: APSP~\cite{nanongkai2014approx}, diameter, radius~\cite{abboud2016,ancona2020}. CONGEST algorithms for these problems have also been studied: APSP~\cite{bernsteinapsp,lenzen2019distributed,agarwal2020deterministic}, diameter and radius~\cite{abboud2016,ancona2020} and betweenness centrality~\cite{hoang2019round}.

The round complexity of both exact and approximate SSSP has been extensively researched~\cite{elkin2006mst,nanongkai2014approx,forster2018sssp,chechiksssp,cao2021approximatesssp,cao2023sssp}. For exact or $(1+\epsilon)$-approximate SSSP, the current best upper and lower bounds are $\tilde{O}(n^{2/5+o(1)}D^{2/5} + \sqrt{n} + D)$~\cite{cao2023sssp} and $\Omega(\sqrt{n}+D)$~\cite{elkin2006mst,sarma2012distributed} respectively. 
 
\subsection{Our Techniques}

\noindent
{\bf CONGEST Lower Bounds.}
Our lower bounds use reductions either from Set Disjointness or from other graph problems with known CONGEST lower bounds. Set Disjointness is a two party communication complexity problem, where two players Alice and Bob are each given a $k$-bit string $S_a$ and $S_b$ respectively. Alice and Bob need to communicate and decide if the sets represented by $S_a$ and $S_b$ are disjoint, i.e., whether there is no bit position $i$, $1 \le i \le k$ with $S_a[i] = 1$ and $S_b[i] = 1$. A classical result in communication complexity is that Alice and Bob must exchange $\Omega(k)$ bits even if they are allowed shared randomness~\cite{kushilevitzcomm,razborov1992,baryossef2002}. Lower bounds using such a reduction also hold against randomized algorithms.

Our reduction from Set Disjointness for the $\tilde{\Omega}(n)$ CONGEST lower bound for directed weighted RPaths is loosely inspired by a construction in a sparse sequential reduction from MWC to RPaths in~\cite{agarwal2018finegrained}. Our near linear lower bounds for directed MWC and undirected weighted MWC also use new constructions to obtain reductions from set disjointness. 
Some of our lower bounds use known unconditional CONGEST lower bounds for problems like $s$-$t$ Subgraph Connectivity and weighted $s$-$t$ Shortest Path~\cite{sarma2012distributed}.

\noindent
{\bf CONGEST Algorithms.}
Adapting the sequential algorithm for directed weighted replacement paths~\cite{yen1971finding} directly to the CONGEST model requires up to $n$ SSSP computations, which is inefficient.  Instead, our CONGEST algorithm builds on a sequential sparse reduction from RPaths to Eccentricities in~\cite{agarwal2018finegrained} which we tailor to work efficiently in the CONGEST model using weighted APSP~\cite{bernsteinapsp} as a subroutine. 
Our algorithm for directed unweighted RPaths is loosely based on the sequential algorithm in~\cite{roditty2012replacement}, but we make significant changes to obtain efficiency in the distributed setting. We use a variety of techniques such as sampling, computing shortest paths in skeleton graphs, and pipelined BFS.

We use sampling in conjunction with source detection in our  
$\tilde{O}(\sqrt{n}+D)$-round $(2- (1/g))$-approximation algorithm for undirected unweighted MWC.
The starting point of our algorithm is the randomized $(2- (1/g))$-approximation algorithm in~\cite{peleg2013girth} that runs in $\tilde{O}(\sqrt{ng}+D)$ rounds in a graph with MWC length $g$. We significantly improve this round complexity by removing the dependence on $g$, and our algorithm compares favorably to the $\tilde{\Omega}(\sqrt{n})$ lower bound for computing a $(2-\epsilon)$-approximation. Using this unweighted algorithm, along with a weight scaling technique and sampling, we present a $(2+\epsilon)$-approximation algorithm for undirected weighted MWC which has sublinear round complexity when $D$ is $\tilde{o}(n^{3/4})$. 

Many of our algorithms do not use any randomness apart from that used by the randomized algorithms for SSSP and APSP. If we use deterministic CONGEST algorithms for these problems, such as for unweighted APSP~\cite{lenzen2019distributed} and weighted APSP~\cite{agarwal2020deterministic}, our algorithms will be deterministic as well. The exceptions to this are our directed unweighted RPaths algorithm and approximate directed weighted RPaths, and our undirected MWC approximation algorithms, which inherently use random sampling. Our exact unweighted MWC and undirected unweighted RPaths algorithms are deterministic.

\subsubsection{Roadmap}
In Section~\ref{sec:prelim}, we define the CONGEST model and notation used in this paper. In Section~\ref{sec:rp}, we discuss lower and upper bounds for the RPaths and 2-SiSP problems. We give new lower bounds for computing RPaths in directed weighted~(Section~\ref{sec:dirrplb}), directed unweighted~(Section~\ref{sec:dirunwrplb}) and undirected~(Section~\ref{sec:undirrplb}) graphs. We also discuss lower bounds for other graph problems in directed unweighted graphs in Section~\ref{sec:dirunwother}. We then give algorithms for computing RPaths in directed weighted~(Section~\ref{sec:dirrpub}), directed unweighted~(Section~\ref{sec:dirunwrpub}) and undirected~(Section~\ref{sec:undirrpub}) graphs. We also discuss an approximation algorithm for directed weighted RPaths~(Section \ref{sec:approxdirrpub}). 

In Section~\ref{sec:mwc}, we discuss lower and upper bounds for the MWC and ANSC problems. We give new near-linear lower bounds for directed~(Section~\ref{sec:dirmwclb}) and undirected~(Section~\ref{sec:undirmwclb}) graphs, including approximation hardness results. We then discuss upper bounds for computing exact MWC, ANSC in Section~\ref{sec:mwcub}. We then present approximation algorithms for undirected unweighted~(Section~\ref{sec:approxundirmwcub}) and weighted~(Section~\ref{sec:approxundirwtmwcub}) MWC. We discuss cycle detection and present a directed cycle detection lower bound in Section~\ref{sec:cycledet}. In Section~\ref{sec:recon}, we discuss how to modify our algorithms to construct the replacement paths and minimum weight cycles. We conclude with some open problems in Section~\ref{conclusion}.

\section{Replacement Paths}
\label{sec:rp}

\subsection{Lower Bounds For Replacement Paths}

In Section~\ref{sec:dirrplb}, we prove near-linear unconditional CONGEST lower bounds for directed weighted RPaths and 2-SiSP. 
In Section~\ref{sec:dirunwrplb} we show a lower bound of $\tilde{\Omega}(\sqrt{n}+D)$  for computing RPaths and 2-SiSP in directed unweighted graphs, which also proves a folklore lower bound for directed single source reachability.

\subsubsection{Directed Weighted Replacement Paths}
\label{sec:dirrplb}

\begin{figure}[t]
    \centering
    \includegraphics[scale=0.65]{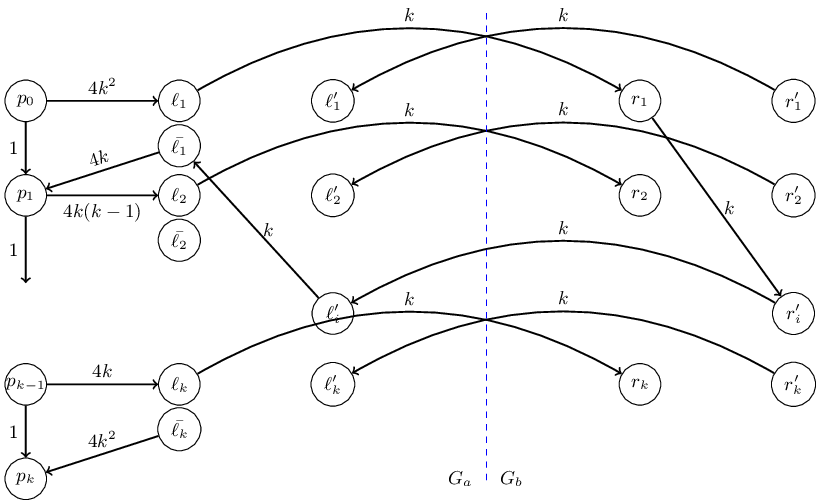}
    \caption{Directed weighted RPaths, 2-SiSP lower bound construction}
    \label{fig:dirrplb} 
\end{figure}

We prove Theorem~\ref{thm:dirrp}.\ref{thm:dirrp:lb} by showing an unconditional $\Omega\left(\frac{n}{\log n}\right)$ lower bound for computing 2-SiSP (which extends to RPaths) in directed weighted graphs. Our proof is based on a new reduction from set disjointness pictured in Figure~\ref{fig:dirrplb}, partly inspired by a sparse sequential reduction from MWC to RPaths in~\cite{agarwal2018finegrained}.

Consider an instance of the Set Disjointness problem where the players Alice and Bob are given $k^2$-bit strings $S_a$ and $S_b$ respectively representing sets of at most $k^2$ elements (the $i$'th bit being 1 indicates element $i$ belongs to the set). The problem is to determine whether $S_a \cap S_b = \phi$, i.e., that for all indices $1 \le i \le k^2$ either $S_a[i] = 0$ or $S_b[i]=0$. A classical result in communication complexity is that Alice and Bob must exchange $\Omega(k)$ bits even if they are allowed shared randomness~\cite{kushilevitzcomm,razborov1992,baryossef2002}. Our reduction constructs the graph $G=(V,E)$ described below and we show in Lemma~\ref{lem:dirrplb} that $G$ has a low-weight 2-SiSP if and only if the sets $S_a$ and $S_b$ are not disjoint.

We will construct $G=(V,E)$ with six sets of vertices (see Figure~\ref{fig:dirrplb}): $L = \{\ell_i \mid 1 \le i \le k\}, L' = \{\ell_i' \mid 1 \le i \le k\}, R = \{r_i \mid 1 \le i \le k\}, R' = \{r_i' \mid 1 \le i \le k\}$, $\overline{L} = \{\overline{\ell_i} \mid 1 \le i \le k\}$, $P = \{p_i \mid 0 \le i \le k\}$. Note that the number of vertices is $n = 6k+1$. We set $s=p_0$ and $t=p_k$ and for each $1 \le i \le k$, we add the edges $(p_{i-1}, p_{i})$ with weight 1; this is the input shortest path $\mathcal{P}=P_{st}$. We add directed edges $(\ell_i, r_i)$ and $(r_i', \ell_i')$ for each $1 \le i \le k$. Each of these edges has weight $k$. For each $1 \le i \le k$, we add the edges  $(p_{i-1},\ell_{i})$ with weight $4k(k-i+1)$ and $(\overline{\ell}_{i}, p_{i})$ with weight $4ki$. This is our base graph.

We now add edges to $G$ based on the set disjointness inputs $S_a$,$S_b$. We encode each integer $q$, $1 \leq q \leq k^2$, as an ordered pair $(i,j)$ such that $q= (i-1) \cdot k + j$. If 
%$S_a[(i-1) \cdot k + j] = 1$, 
$S_a[q]=1$,
we add the edge $(\ell_j', \overline{\ell_i})$ with weight $k$, if 
%$S_b[(i-1) \cdot k + j] = 1$, 
$S_b[q]=1$,
we add the edge $(r_i, r_j')$ with weight $k$. For the 2-SiSP problem, the desired output is $d_2(p_0, p_k)$, the weight of a second simple shortest path from $p_0$ to $p_k$.

\begin{lemma}
    If $S_a \cap S_b \ne \emptyset$, then $d_2(p_0, p_k) \le (4k^2 + 9k-1)$. Otherwise, if $S_a \cap S_b = \emptyset$, then $d_2(p_0, p_k) \ge (4k^2 + 12k)$.
    \label{lem:dirrplb}
\end{lemma}
\begin{proof}
    If the sets $S_a, S_b$ are not disjoint, then there exists $1\le i,j \le k$ such that $S_a[(i-1) \cdot k+j] = S_b[(i-1)\cdot k +j] = 1$. Then, the path $\langle p_{i-1}, \ell_i, r_i, r_j', \ell_i', \overline{\ell}_i, p_{i}\rangle$ provides a detour for the edge $(p_{i-1}, p_{i})$. This can be used along with the shortest paths from $p_0$ to $p_{i-1}$ and $p_{i}$ to $p_{k}$ to obtain a simple path of weight $4k(k+1)+4k+k-1$ that does not use edge $(p_{i-1}, p_i)$. So, the second simple shortest path from $p_0$ to $p_k$ has weight at most $4k^2+9k-1$.

    Assume the strings are disjoint. Let $\mathcal{P}_2$ be a second simple shortest path, and let $(p_{i-1},p_{i})$ be the first edge that is not in $\mathcal{P}_2$ but is in the $p_0$-$p_k$ shortest path $\mathcal{P}$. Since the only other outgoing edge from $p_{i-1}$ is $(p_{i-1}, \ell _i)$ (with weight $4k(k-i+1)$), this edge must be on $\mathcal{P}_2$. Let $p_j$ ($j \ge i$) be the next vertex from $\mathcal{P}$ that is also on path $\mathcal{P}_2$, such a vertex must exist as $p_k$ is on $\mathcal{P}$ and $\mathcal{P}_2$. By the construction of $G$, edge $(\overline{\ell}_j, p_j)$ (with weight $4kj$) must be in $\mathcal{P}_2$ which means the path $\mathcal{P}_2$ has weight at least $4k(k-i+1) + 4kj$ not including edges in the path from $\ell_i$ to $\overline{\ell}_j$. We also observe that any path from $\ell_i$ to $\overline{\ell}_j$ requires at least 4 edges, with total weight $4k$. If we have $j > i$, we immediately conclude that $\mathcal{P}_2$ has weight at least $4k(k-i+1 + j) + 4k \ge 4k(k+1) + 8k$. If $j=i$, then $\mathcal{P}_2$ contains a path from $\ell_i$ to $\overline{\ell}_i$. 
    This path can have length 4 if and only if the edges $(r_i, r_j')$ and $(\ell_j',\overline{\ell}_i)$ simultaneously exist for some $j$, which means  $S_a[(i-1) \cdot k+j] = S_b[(i-1)\cdot k +j] = 1$. This contradicts the assumption that strings $S_a$ and $S_b$ are disjoint. So, this $\ell_i$ to $\overline{\ell}_i$ path has length at least 8, which means $\mathcal{P}_2$ has weight at least $4k(k+1) + 8k = 4k^2+12k$. 
\end{proof}

To complete the reduction from set disjointness, assume that there is a CONGEST algorithm $\mathcal{A}$ that takes $R(n)$ rounds to compute the weight of a 2-SiSP path in a directed weighted graph on $n$ vertices. Consider the vertex partition $V_a, V_b$ of $V$ with $V_a = L \cup L' \cup \overline{L} \cup P$ and $V_b = R \cup R'$, and let $G_a(V_a, E_a), G_b(V_b, E_b)$ be the subgraphs of $G$ induced by the vertex sets $V_a, V_b$ respectively. Note that $G_a$ is completely determined by $S_a$ and $G_b$ is completely determined by $S_b$. Alice and Bob will communicate to simulate $\mathcal{A}$ on $G$. Alice will simulate the computation done in nodes in $V_a$, and Bob will simulate the computation done in nodes in $V_b$. If the algorithm communicates from a node in $V_a$ to a node in $V_b$, Alice sends all the information communicated along this edge to Bob. Since there are $2k$ cut edges, and $\mathcal{A}$ can send  $O(\log n)$ bits through each edge per round, Alice and Bob communicate up to $O(2k \cdot \log n)$ bits per round, for a total of $O(2k \cdot \log n \cdot R(n))$ bits. After the simulation, Alice knows $d_2(p_0, p_k)$ and can determine if the sets are disjoint by checking if $d_2(p_0, p_k) > 4k^2+9k-1$ (Lemma~\ref{lem:dirrplb}). Since any communication protocol for set disjointness must use at least $\Omega(k^2)$ bits and $n = \Theta(k)$, $R(n)$ is $\Omega\left(\frac{n}{\log n}\right)$. 

Our lower bound also applies to the RPaths problem, since given an algorithm $\mathcal{A}$ that computes replacement path for each edge, Alice can compute the minimum of those to get the second simple shortest path weight and then use Lemma~\ref{lem:dirrplb} as before. This lower bound applies even for graphs with constant undirected diameter: we can add a `sink' vertex with incoming edges from all vertices in $G$, so that Lemma~\ref{lem:dirrplb} still holds and the undirected diameter is 2.

\subsubsection{Directed Unweighted Replacement Paths Lower Bound}
\label{sec:dirunwrplb}

Our lower bound method uses a reduction from the undirected $s$-$t$ \textit{subgraph connectivity} problem defined in~\cite{sarma2012distributed} as follows: Given an undirected CONGEST network $G$ with $n$ vertices, a subgraph $H$ of $G$, and two vertices $s$, $t$, determine whether $s$ and $t$ are in the same connected component of $H$. The input subgraph $H$ is given by letting each vertex know which of its incident edges are in $H$. It is shown in~\cite{sarma2012distributed} that this problem has a lower bound of $\Omega\left(\frac{\sqrt{n}}{\log n} + D\right)$ in graphs with $D$ as small as $\Theta(\log n)$. We assume WLOG that network $G$ is connected.

\begin{figure}[t]
    \centering
    \includegraphics[scale=0.65]{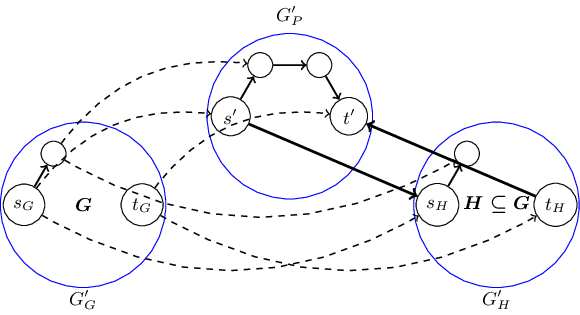}
    \caption{Directed unweighted RPaths, 2-SiSP lower bound graph $G'$}
    \label{fig:dirunwrplb} 
\end{figure}
\begin{proof}[Proof of Theorem~\ref{thm:undirrp}.\ref{thm:undirrp:lb} and Theorem~\ref{thm:dirrp}.\ref{thm:dirrp:approxlb}]
    Given an instance of $s$-$t$ subgraph connectivity with undirected network $G$, vertices $s$ and $t$ and subgraph $H$, our first attempt is to construct a directed unweighted graph $G'$ with two copies of $V(G)$: $G'_H$ contains only the edges in $H$, with bidirectional edges, and $G'_P$ contains only a directed shortest path from $s'$ to $t'$ made of edges in $G$ where $s',t'$ are copies of $s,t$. These copies are connected with directed edges  $(s',s_H)$ and $(t_H,t')$ (Figure~\ref{fig:dirunwrplb} without copy $G'_G$).

    This construction has the property that there is a second directed path from $s'$ to $t'$ in $G'$ (apart from the one in $G'_P$) if and only if there is an $s_H$-$t_H$ path in $G_H$. So 2-SiSP weight in $G$ is finite iff $s$,$t$ are connected in $H$. But, this construction could have high undirected diameter as we have no control over the diameter of $H$, and fails to give a meaningful lower bound.
    
    To obtain small undirected diameter, we add a third copy of $G$, denoted $G'_G$, which has all edges of $G$ as bidirectional edges, pictured in Figure~\ref{fig:dirunwrplb}. This copy is connected to the others with directed edges $(v_G, v_H)$ and $(v_G, v')$ where $v_G$ is the copy in $G'_G$ of $v \in G$. The undirected diameter of $G'$ is now $(D+2)$ ($D$ is the diameter of $G$) as we can connect any pair of vertices using a bidirectional path in $G'_G$ along with at most 2 connecting edges. This addition does not add any new directed paths from $s'$ to $t'$.

    Any communication in $G'$ can be simulated in a constant number of rounds in the underlying network of $G$, as each node $v$ in the network can simulate vertices $v_G,v_H,v'$ of $G'$, and all edges in $G'$ are either within the same node or have an underlying undirected edge of $G$. Constructing $G'$ requires only an $O(D)$-round computation of undirected shortest path from $s$ to $t$ in $G$. 

    This completes the reduction and establishes a lower bound of $\Omega\left(\frac{\sqrt{n}}{\log n} +D \right)$ for 2-SiSP (and RPaths) in unweighted directed graphs by additionally noting that $\Omega(D)$ rounds are necessary, as with other global problems in the distributed model~\cite{sarma2012distributed}, for information to travel to the farthest vertices to determine 2-SiSP. Our lower bound also applies to any $\alpha$-approximation algorithm ($\alpha > 1$) since we distinguish between 2-SiSP of length $\le n+2$ and infinite length. 
\end{proof}

\subsubsection{Other Directed Unweighted Graph Problems}
\label{app:dirunwlb}
\label{sec:dirunwother}

Our lower bound for directed unweighted RPaths can be adapted to give the same lower bound for other graph problems on directed unweighted graphs, including the basic problems of $s$-$t$ directed reachability and $s$-$t$ directed shortest path. These folklore lower bounds~\cite{ghaffari2015reach} have been attributed to~\cite{sarma2012distributed} which only deals with undirected graphs, and we make these results explicit. In Section~\ref{sec:dirmwclb} we will show an even stronger lower bound of $\Omega (n/\log n)$ for computing MWC and fixed-length cycle detection in directed unweighted graphs (also see Section~\ref{sec:cycledet}).

\begin{lemma}
    Any algorithm computing $s$-$t$ directed reachability or $s$-$t$ directed shortest path in a directed unweighted graph requires $\Omega\left(\frac{\sqrt{n}}{\log n} + D\right)$ rounds, even if the graph has undirected diameter as low as $\Theta(\log n)$. 
\end{lemma}
\begin{proof}
    We use a simpler version of the construction in Figure~\ref{fig:dirunwrplb} by removing $G'_P$ from the graph $G'$ to form a graph $G''$.  A directed path from $s_H$ to $t_H$ exists in $G''$ if and only if $s$ and $t$ are connected in the subgraph $H$. Using the same arguments as the RPaths lower bound, we note that $G''$  has undirected diameter $O(D)$ when the network $G$ has undirected diameter $D$ and $G''$ can be efficiently simulated on the original network $G$. So 
    we get the desired reduction for both problems from $s$-$t$ subgraph connectivity~\cite{sarma2012distributed}. 
\end{proof}

\subsubsection{Undirected Graphs RPaths Lower Bound}
\label{sec:undirrplb}
We show a lower bound of $\Omega\left(\frac{\sqrt{n}}{\log n} + D\right)$ for computing RPaths and 2-SiSP in undirected (weighted or unweighted) graphs, proving Theorem~\ref{thm:undirrp}.\ref{thm:undirrp:lb} using a reduction from $s$-$t$ weighted shortest path for which a $\Omega\left(\frac{\sqrt{n}}{\log n} + D\right)$ lower bound is known~\cite{elkin2006mst,sarma2012distributed}. Assume we are given an instance $G=(V,E)$ of $s$-$t$ weighted shortest path with $n$ vertices and undirected diameter $D$. We construct a graph $G'$ that is an undirected version of Figure~\ref{fig:dirunwrplb}, with two copies $G'_G$ and $G'_P$ (without the $G'_H$ copy) where $G'_G$ is just a copy of $G$ and $G'_P$ contains one undirected path from $s$ to $t$ using edges in $G$ with each edge weight being 1 (we assume network $G$ is connected). We connect these copies by adding undirected edges $(s_G,s')$ and $(t_G,t')$ with weight $n$. 
Following the arguments in Section~\ref{sec:dirunwrplb}, the new graph $G'$ has undirected diameter $\le D+2$. Constructing $G'$ requires only $O(D)$ rounds to compute an undirected $s$-$t$ path in $G$. We can simulate the graph $G'$ in the network $G$ by mapping vertices $v_G,v' \in V(G')$ to $v \in V$, then each edge in $G'$ either maps to an existing edge in $G$ or is within the same vertex of $G$, as in the argument used in Section~\ref{sec:dirunwrplb}.

The shortest path from $s'$ to $t'$ is the path in $G'_P$ (of weight at most $n$) and a second simple shortest path from $s'$ to $t'$ (with weight at least $2n$) is precisely a $s$-$t$ shortest path in $G$ along with the two connecting edges between $G'_G$ and $G'_P$. Thus, 2-SiSP distance in $G'$ is precisely $2n + d(s,t)$, where $d(s,t)$ is the $s$-$t$ shortest path distance in $G$, completing our reduction.

Hence the known lower bound of $\Omega\left(\frac{\sqrt{n}}{\log n} + D\right)$ for undirected weighted $s$-$t$ shortest path~\cite{sarma2012distributed} gives the same lower bound for undirected weighted 2-SiSP (and RPaths), even when parameters $h_{st}$ and $D$ are as low as $\Theta(\log n)$. In undirected unweighted graphs, we have an $\Omega(D)$ lower bound for RPaths and 2-SiSP for information to travel to the farthest vertices (similar to other global problems in the distributed model~\cite{sarma2012distributed}). This again matches the lower bound of $\Omega(D)$ for unweighted shortest path.

\subsection{Upper Bounds for Replacement Paths}

\subsubsection{Directed Weighted Replacement Paths}
\label{sec:dirrpub}

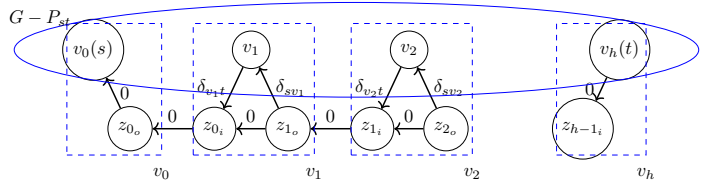
\begin{figure}
    \centering
    \tikzstyle{vertex}=[circle, draw=black,minimum size=20pt]
    \scalebox{0.7}{
    \begin{tikzpicture}
    
        \node[vertex] (v0) at (0,0) {$v_0(s)$};
        \node[vertex] (v1) at (3,0) {$v_1$};
        \node[vertex] (v2) at (6,0) {$v_2$};
        \node[vertex] (vk) at (10,0) {$v_h(t)$};

        \node[vertex] (z0o) at (0.7,-1.5) {$z_{0_o}$};
        \node[vertex] (z0i) at (2.3,-1.5) {$z_{0_i}$};
        \node[vertex] (z1o) at (3.7,-1.5) {$z_{1_o}$};
        \node[vertex] (z1i) at (5.3,-1.5) {$z_{1_i}$};
        \node[vertex] (z2o) at (6.7,-1.5) {$z_{2_o}$};
        % \node[vertex] (z2i) at (8.3,-1.5) {$z_{2_i}$};
        \node[vertex] (zk1i) at (9.3,-1.5) {$z_{{h-1}_i}$};

        \path[draw,thick,->] (z0o) edge node[midway,right] {0} (v0);
        \path[draw,thick,->] (z0i) edge node[midway,above] {0} (z0o);
        \path[draw,thick,->] (v1) edge node[midway,left] {$\delta_{v_1 t}$} (z0i);
        \path[draw,thick,->] (z1o) edge node[midway,above] {0} (z0i);

        \path[draw,thick,->] (z1o) edge node[midway,right] {$\delta_{s v_1}$} (v1);
        \path[draw,thick,->] (z1i) edge node[midway,above] {0} (z1o);
        \path[draw,thick,->] (v2) edge node[midway,left] {$\delta_{v_2 t}$} (z1i);
        \path[draw,thick,->] (z2o) edge node[midway,above] {0} (z1i);

        \path[draw,thick,->] (z2o) edge node[midway,right] {$\delta_{s v_2}$} (v2);
        \path[draw,thick,->] (vk) edge node[midway,left] {0} (zk1i);

        \draw[blue] (5,0) ellipse (6.5cm and 0.9cm);
        \node (g) at (-1.0,1.0) [label=below:$G-P_{st}$] {};

        \draw[blue,dashed] (-0.5,0.5) rectangle (1.3,-2);
        \node (u) at (1.3,-2) [label=below:$v_0$] {};

        \draw[blue,dashed] (1.9,0.5) rectangle (4.2,-2);
        \node (u) at (4.2,-2) [label=below:$v_1$] {};

        \draw[blue,dashed] (4.9,0.5) rectangle (7.2,-2);
        \node (u) at (7.2,-2) [label=below:$v_2$] {};

        \draw[blue,dashed] (8.8,0.5) rectangle (10.5,-2);
        \node (u) at (10.5,-2) [label=below:$v_h$] {};

    \end{tikzpicture}
    }
    \caption{Directed weighted RPaths reduction to APSP}
    % \Description{Figure depicting graph construction for Directed weighted RPaths algorithm.}
    \label{fig:dirrpub} 
\end{figure}

In this section we present an $\tilde{O}(n)$ round CONGEST algorithm for computing RPaths and 2-SiSP in directed weighted graphs, which is nearly optimal given the near linear lower bound in Section~\ref{sec:dirrplb}. Our main tool is a reduction from RPaths to weighted APSP that can be simulated efficiently in the CONGEST network. This reduction is inspired by a sequential fine-grained reduction from RPaths to Eccentricities in~\cite{agarwal2018finegrained},
though some care is needed to ensure that the reduction can be efficiently mapped to the underlying CONGEST network. We present our algorithm to compute replacement path weights, and the second simple shortest path weight $d_2(s,t)$ can be computed by taking the minimum weight replacement path among those computed, with an additional $O(D)$ rounds.

Our algorithm constructs a graph $G'$ pictured in Figure~\ref{fig:dirrpub}, and runs a weighted APSP algorithm on $G'$. We show later how communication in the newly constructed $G'$ can be simulated efficiently in the underlying CONGEST network of $G$, so that the APSP algorithm can be applied to $G'$ in $\tilde{O}(n)$ rounds. The algorithm uses the $\tilde{O}(n)$ round weighted APSP algorithm~\cite{bernsteinapsp} as a subroutine and has $O(n)$ additive overhead, giving our $\tilde{O}(n)$ round bound.

We construct graph $G'(V',E')$ with $V' = V \cup Z_o \cup Z_i$, where $Z_o = \{z_{j_o} \mid 0 \le j < h\}$,  $Z_i = \{z_{j_i} \mid 0 \le j < h\}$. We denote the nodes on the shortest path $P_{st}$ by $s=v_0, v_1, \dots v_h=t$. $E'$ contains all edges in $E$ with their original weights, except the edges from the given shortest path $P_{st}$ which are removed. Additionally, $E'$ contains directed edges $(z_{j_o}, v_j), (z_{j_i}, z_{j_o}), (v_{j+1}, z_{j_i})$ for $0 \le j < h$. Edge $(z_{j_o}, v_j)$ has weight $\delta_{sv_j}$, edge $(v_{j+1}, z_{j_i})$ has weight $\delta_{v_{j+1}t}$ and edge $(z_{j_i}, z_{j_o})$ has weight 0 --- recall that $\delta_{sv_j}$ denotes the shortest path distance from $s$ to $v_j$ in $G$. We use $d'$ to denote shortest path distances in $G'$. The following lemma shows that we can compute replacement paths in $G$ using distances in $G'$.

\begin{lemma}
    \label{lem:dirrpub}
    The shortest path distance $d'(z_{j_o},z_{j_i})$ in $G'$~(Figure \ref{fig:dirrpub}) is equal to the replacement path weight $d(s,t,(v_j,v_{j+1}))$ in the original graph $G$.
\end{lemma}
\begin{proof}
    Let $\mathcal{P}$ be a replacement path for the edge $(v_j,v_{j+1})$ with weight $d(s,t,(v_j,v_{j+1}))$. We will construct a path from $z_{j_o}$ to $z_{j_i}$ that has the same weight as $\mathcal{P}$. Let $v_a$ be the first vertex where $\mathcal{P}$ deviates from $P_{st}$, $v_b$ be the first vertex after $v_a$ where $\mathcal{P}$ rejoins $P_{st}$. Note that $a \le j$ and $b \ge j+1$ as it is a replacement path for edge $(v_j,v_{j+1})$, and $\mathcal{P}$ contains a subpath $\mathcal{P}_{ab}$ from $v_a$ to $v_b$ that does not contain any edge from $P_{st}$.  Construct the path $\langle z_{j_o}, \dots z_{a_o}, v_a\rangle \circ \mathcal{P}_{ab} \circ \langle v_b, z_{{b-1}_i}, \dots z_{j_i}\rangle $, which has weight $w(z_{a_o}, v_a) + w(\mathcal{P}_{ab}) + w(v_b, z_{{b-1}_i}) = \delta_{sa} + w(\mathcal{P}_{ab}) + \delta_{bt} = w(\mathcal{P})$. Thus, we have $d'(z_{j_o},z_{j_i}) \le d(s,t,(v_j,v_{j+1}))$.
    
    Now, consider any shortest path $P$ from $z_{j_o}$ to $z_{j_i}$.
    Observe that any such $P$ must use a unique edge of the form $(z_{a_o}, v_a)$ and a unique edge $(z_{b}, z_{{b-1}_i})$ in order to reach $z_{j_i}$ from $z_{j_o}$, where $a \le j$ and $b \ge j+1$. Denote the subpath of $P$ from $v_a$ to $v_b$ as $P_{ab}$. Now, consider the path $\mathcal{P}$ in $G$ obtained by concatenating $s$-$v_a$ shortest path, $P_{ab}$ and $v_b$-$t$ shortest path --- this is a replacement path for edge $(v_j,v_{j+1})$ since $a \le j$, $b \ge j+1$ and $P_{ab}$ does not contain any edge on $P_{st}$. The weight of $\mathcal{P}$ is equal to $\delta_{sv_a} + w(P_{ab}) + \delta_{v_bt}$ which is equal to $w(P)$ since the only nonzero weight edges on $P$ that are outside subpath $P_{ab}$ are the ones with weight $\delta_{sv_a}$ and $\delta_{v_bt}$. Hence $d(s,t,(v_j,v_{j+1})) \le d'(z_{j_o},z_{j_i})$. 
\end{proof}

To simulate an APSP algorithm on $G'$ using the communication network $G$, we assign vertices $v_i, z_{{i-1}_i}, z_{i_o}$ of $G'$ to be simulated by CONGEST node $v_i$ of $G$ --- this is represented by the dashed boxes in Figure~\ref{fig:dirrpub}. This ensures that any edge of $G'$ corresponds to either a communication link between nodes in the CONGEST network of $G$, or the edge is within the same node of $G$. We can compute the weights required to simulate $G'$ after two SSSP computations with $s$ and $t$ as sources, and use an $\tilde{O}(n)$ algorithm to compute APSP in $G'$~\cite{bernsteinapsp}. We show how to augment this algorithm to construct replacement paths using routing tables in Section~\ref{sec:rpathsrecon}.

When the hop length of the $s$-$t$ path $h_{st}$ is small, the simple algorithm of performing $h_{st}$ shortest path computations with each edge on the $s$-$t$ path removed gives us an $O(h_{st} \cdot SSSP)$ round algorithm. We can obtain an improved round complexity if we only require a $(1 + \epsilon)$-approximation of the replacement path weight. We defer the presentation of this approximation algorithm for directed weighted RPaths to Section~\ref{sec:approxdirrpub} since it uses techniques that build on the directed unweighted RPaths algorithm.

\subsubsection{Directed Unweighted Replacement Paths}
\label{sec:dirunwrpub}

In this section, we present an algorithm for computing RPaths in directed unweighted graphs, proving Theorem~\ref{thm:dirunwrp}.\ref{thm:dirunwrp:ub}.
In the sequential setting, there are two approaches to compute replacement paths in directed graphs: (1) remove each edge in the input path $P_{st}$ and compute shortest paths in the resulting graphs separately, using $h_{st}$ shortest path computations~\cite{yen1971finding}, (2) compute shortest detour distances in order to compute replacement paths: A {\it detour} from $a$ to $b$, where $a,b$ are vertices on $P_{st}$, is a simple path from $a$ to $b$ with no edge in common with $P_{st}$. Any replacement path for edge $e\in P_{st}$ can be characterized as the concatenation of an initial $s$-$a$ subpath of $P_{st}$, a detour from $a$ to $b$, and a final $b$-$t$ subpath of $P_{st}$, where $a,b$ are vertices in $P_{st}$ such that $e$ is contained in the $a$-$b$ subpath of $P_{st}$~\cite{yen1971finding,roditty2012replacement}.

Our distributed algorithm uses both these approaches for different ranges of $h_{st},D$ (as in line~\ref{alg:dirunwrp:param} of Algorithm~\ref{alg:dirunwrp}). In the first method, used in Case 1 of Algorithm~\ref{alg:dirunwrp}, we compute replacement paths in $O(h_{st} \cdot SSSP)$ rounds using the obvious algorithm of removing one of the $h_{st}$ edges on the input shortest path $P_{st}$ and computing SSSP from $s$. We use a directed weighted SSSP algorithm with the weight of the removed edge set to $\infty$.  We do not use unweighted directed BFS since an $s$-$t$ shortest path could have up to $n-1$ hops after edge removal.

In the second method, used in Case 2 of Algorithm~\ref{alg:dirunwrp}, we present a distributed detour-based algorithm that runs in $\tilde{O}(n^{2/3}+\sqrt{nh_{st}}+D)$ rounds.

To compute short detours (hop length $\le h$, parameter $h$ determined in line~\ref{alg:dirunwrp:param}), our distributed algorithm exploits pipelining to compute $h$-hop limited BFS from each vertex on $P_{st}$ in $O(h_{st}+h)$ rounds~\cite{lenzen2019distributed,hoang2019round}. We compute these distances in the graph $G-P_{st}$, which is the graph $G$ with edges on $P_{st}$ removed. We denote shortest path distances in graph $G-P_{st}$ by $d^-(u,v)$.

For long detours (hop length $>h$), we sample $\Theta(p)$ vertices ($p$ determined in line~\ref{alg:dirunwrp:param}) in line~\ref{alg:dirunwrp:sample} and compute a `skeleton graph' on the set of sampled vertices: for $u,v \in S$, we add a directed edge $(u,v)$ to the skeleton graph with weight $d^-(u,v)$ if there is an $h$-hop directed shortest path from $u$ to $v$ in $G - P_{st}$. The edges of the skeleton graph are computed using an $h$-hop BFS in line~\ref{alg:dirunwrp:short}. The $h$-hop distances between all pairs of sampled vertices, and between sampled vertices and vertices on $P_{st}$ are broadcast to all vertices in line~\ref{alg:dirunwrp:broadcast}. Algorithm~\ref{alg:localdetour}, described below, is run at each vertex $a \in P_{st}$. It uses the $h$-hop distances (broadcast in line~\ref{alg:dirunwrp:broadcast} of Algorithm~\ref{alg:dirunwrp}) to {\it locally} compute at vertex $a$ all detours starting from $a$. It then computes at $a$ the best candidate replacement path among paths first deviating from $P_{st}$ at $a$ for each edge $e \in P_{st}$ that occurs after $a$ on $P_{st}$, denoted $d^a(s,t,e)$. Finally, Algorithm~\ref{alg:dirunwrp} performs a pipelined minimum operation along $P_{st}$ in line~\ref{alg:dirunwrp:globalrp} to compute shortest replacement path distances for all $h_{st}$ edges among candidate replacement paths computed by Algorithm~\ref{alg:localdetour} at each $a\in P_{st}$.

\begin{algorithm}[t]
    \caption{Directed Unweighted RPaths Algorithm}
    \begin{algorithmic}[1]
        \Require Graph $G=(V,E)$, vertices $s,t \in V$, $s$-$t$ shortest path $P_{st}$.
        \Ensure Replacement path distance $d(s,t,e)$ known at $s$ for each $e \in P_{st}$.
        \State \textbf{Case 1.} $D \le n^{1/4}, h_{st} \le n^{1/6}$ or $n^{1/4} < D \le n^{2/3}, h_{st} \le n^{1/3}$
        \State {Perform $h_{st}$ directed weighted SSSP computations in sequence, with each edge $e \in P_{st}$ having its weight set to $\infty$ to compute $d(s,t,e)$.}
        \State \textbf{Case 2.} $D \le n^{1/4}, h_{st}>n^{1/6}$ or $n^{1/4} < D \le n^{2/3}, h_{st} > n^{1/3}$ or $D>n^{2/3}$
        \State Fix parameter $p=n^{1/3}$ if $h_{st} < n^{1/3}$ and $p=\sqrt{n/h_{st}}$ if $h_{st} \ge n^{1/3}$, and fix $h = n/p$. \label{alg:dirunwrp:param}
        \State  Let graph $G - P_{st}$ be $G$ but with all edges in $P_{st}$ removed. \label{alg:dirunwrp:skeletonst}
        \State Sample each vertex $v \in G$ with probability $\Theta\left(\frac{\log n}{h}\right)$, let the set of sampled vertices be $S$. \label{alg:dirunwrp:sample} 
        \ForAll{vertex $v \in P_{st} \cup S$} 
        \mComment{Line~\ref{alg:dirunwrp:short} computes $h$-hop $S \times V(P_{st})$ distances and $h$-hop $S \times S$ distances (edges of the skeleton graph on $S$)}
        \State  {Perform BFS starting from $v$ on $G - P_{st}$, and the reversed graph, up to $h$ hops to compute unweighted shortest paths: for $u \in P_{st} \cup S$, both $v$ and $u$ know the $h$-hop limited distance $d^-(v,u)$. Since we have $(|S|+h_{st})$ sources, this takes $O(|S|+h_{st}+h)$ rounds.} \label{alg:dirunwrp:short}
        \State {Broadcast $\{d^-(v,u) \mid u \in S \text{ or } v \in S\}$. At most $(|S|^2 + h_{st}|S|)$ values are broadcast, taking $O(|S|^2+h_{st}|S|+D)$ rounds.}  \label{alg:dirunwrp:broadcast}
        \EndFor 
        \ForAll{vertex $a \in P_{st}$} 
            \mComment{Internally compute replacement paths $d^a(s,t,e)$ that deviate from $P_{st}$ at $a$, using distances broadcast in line~\ref{alg:dirunwrp:broadcast}}
            \State ComputeLocalRPaths($a$) \Comment{Algorithm~\ref{alg:localdetour}}
        \EndFor
        \For{edge $e \in P_{st}$} 
            \State {Compute $d(s,t,e) \gets \min_a d^a(s,t,e)$ by propagating values from $a \in P_{st}$ up the path $P_{st}$. The minimum for a single $e$ over all $a \in P_{st}$ takes $O(h_{st})$ rounds, and the computation for all $e \in P_{st}$ can be pipelined in $O(h_{st})$ rounds.} \label{alg:dirunwrp:globalrp}
        \EndFor \label{alg:dirunwrp:rpend}
    \end{algorithmic}
    \label{alg:dirunwrp}
\end{algorithm}

\begin{algorithm}[t]
    \caption{Local computation at $a \in P_{st}$ of candidate replacement paths deviating from vertex $a$}
    \begin{algorithmic}[1]
        \Require Graph $G=(V,E)$, shortest path $P_{st}$, subset $S \subseteq V$. The following distances in graph $G - P_{st}$ are known to $a$: $h$-hop distances $d^-(b,u)$, $d^-(u,b)$ for any $b \in P_{st}, u \in S$, $d^-(u,v)$, for $u,v \in S$, and  $d^-(a,b)$ for $b \in P_{st}$.
        \Ensure Vertex $a$ computes for each $e \in P_{st}$ after $a$ on $P_{st}$, the shortest replacement path distance $d^a(s,t,e)$ among paths that first deviate from $P_{st}$ at $a$.
        \Procedure{ComputeLocalRPaths}{$a$}
            \mComment{All computation is done internally using distances known to $a$.}
            \State Locally compute all pairs distances in skeleton graph on $S$: compute $d^-(y,z)$ for each $y,z \in S$ using the skeleton graph $h$-hop edge distances.\label{alg:dirunwrp:skeletondist}
            \ForAll{vertex $b \in P_{st}$ after $a$ along $P_{st}$} \label{alg:dirunwrp:detourst}
            \State Compute the best (short or long) detour $\delta(a,b)$  from $a$ to $b$ as \\ \;\; $\delta(a,b) = \min \left( d^-(a,b), \min_{u,v \in S} \left( d^-(a,u) + d^-(u,v) + d^-(v,b) \right) \right)$  \label{alg:dirunwrp:detourfinal}
            \EndFor \label{alg:dirunwrp:detourend}
            \For{edge $e = (x,y) \in P_{st}$ such that $a$ that appears before $x$ on $P_{st}$ or $a=x$} \label{alg:dirunwrp:rpst}
                \State {$d^a(s,t,e) = \min_{b \in P_{st}} \left( \delta_{sa} + \delta(a,b) + \delta_{bt}\right) $ (the minimum is over vertices $b$ that appear after $y$ on $P_{st}$ or $b=y$)} \label{alg:dirunwrp:localrp}
            \EndFor \label{alg:dirunwrp:localend}
        \EndProcedure
    \end{algorithmic}
    \label{alg:localdetour}
\end{algorithm}

\vspace{0.05in}
\noindent
{\bf Computation in Algorithm~\ref{alg:localdetour} (local computation at each $a\in P_{st}$).}
Algorithm~\ref{alg:localdetour} at vertex $a\in P_{st}$ takes as input the $h$-hop distances to and from $a$, computed in line~\ref{alg:dirunwrp:short} of Algorithm~\ref{alg:dirunwrp}, and the $h$-hop distances broadcast in line~\ref{alg:dirunwrp:broadcast} of Algorithm~\ref{alg:dirunwrp}. 
In Algorithm~\ref{alg:localdetour}, all pairs shortest path distances $d^-(u,v)$ for $u,v \in S$ in the skeleton graph are locally computed in line~\ref{alg:dirunwrp:skeletondist} using the $h$-hop skeleton graph edge distances. These distances, along with $h$-hop distances $d^-(u,b)$ for $u\in S, b \in P_{st}$, are used to compute long detours in line~\ref{alg:dirunwrp:detourfinal}. Short detours are computed at $a$ using the $h$-hop distance $d^-(a,b)$ to each vertex $b \in P_{st}$. With the best detour distances $\delta(a,b)$ computed in line~\ref{alg:dirunwrp:detourfinal}, $a$ locally computes replacement paths using detours starting from $a$ in line~\ref{alg:dirunwrp:localrp}, which gives the best candidate replacement path distance $d^a(s,t,e)$ among paths that first deviate from $P_{st}$ at $a$, for each edge $e$ after $a$ on $P_{st}$. 

\begin{lemma}
    The local computation in Algorithm~\ref{alg:localdetour} at $a \in P_{st}$ correctly computes $d^a(s,t,e)$, the minimum weight replacement path for $e \in P_{st}$ among paths that first deviate from $P_{st}$ at $a$. \label{lem:alglocal}
\end{lemma}
\begin{proof}
    We assume that $a$ knows the correct $h$-hop distances specified as input to Algorithm~\ref{alg:localdetour}. Note that the vertices and distances along $P_{st}$ are known to $a$ as part of RPaths input. 

    Due to our sampling probability, any shortest path between $u,v \in S$ can be decomposed into $h$-hop subpaths between sampled vertices w.h.p.\ in $n$. So in line~\ref{alg:dirunwrp:skeletondist}, vertex $a$ correctly computes all pairs shortest path distances between sampled vertices in $S$ using $h$-hop skeleton graph distances. 
    
    Now we show that line~\ref{alg:dirunwrp:detourfinal} computes a shortest detour $P^d_{ab}$ from $a$ to each $b \in P_{st}$ that occurs after $a$ on $P_{st}$, whose distance is denoted $\delta(a,b)$. If $P^d_{ab}$ is a short detour, with hop length $\le h$, its distance is equal to the $h$-hop distance $d^-(a,b)$ which is part of the input to $a$.  
    
    If $P^d_{ab}$ is a long detour, with hop length $> h$, we use the fact that due to our sampling probability, any path of $h$ hops contains a sampled vertex in $S$ w.h.p.\ in $n$. We can find a sampled vertex $u$ on the detour within $h$ hops from $a$ and a sampled vertex $v$ on the detour within $h$ hops from $b$. We will assume WLOG that $v$ occurs after $u$ or $u=v$. Then, the detour distance is $\delta(a,b) = d^-(a,u)+d^-(u,v)+d^-(v,b)$, and line~\ref{alg:dirunwrp:detourfinal} correctly computes this distance. 
    
    In any replacement path for edge $e \in P_{st}$ first deviating from $P_{st}$ at $a$, there is a vertex $b \in P_{st}$ where it rejoins $P_{st}$. We can characterize such a replacement path as the concatenation of the $s$-$a$ subpath of $P_{st}$, a detour $P^d_{ab}$ from $a$ to $b$, and the $b$-$t$ subpath of $P_{st}$. This path has weight $\delta_{sa} + \delta(a,b)+\delta_{bt}$. We then compute the minimum over all valid detour endpoints $b$ in line~\ref{alg:dirunwrp:localrp}. This correctly computes $d^a(s,t,e)$ for edges $e=(x,y)$ that are on the $a$-$b$ subpath of $P_{st}$ 
\end{proof}

\begin{lemma}
    Algorithm~\ref{alg:dirunwrp} computes replacement path weights in a directed unweighted graph in $\tilde{O}(\min(n^{2/3} + \sqrt{nh_{st}} + D, h_{st} \cdot SSSP))$ rounds. 
    \label{lem:dirumwrp}
\end{lemma}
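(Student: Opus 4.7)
My plan is to split the work according to the hop length of the detour in $H$, where $H = G \setminus E(P_{st})$. Any replacement path for edge $(v_j, v_{j+1})$ has the form $\langle v_0, \ldots, v_a\rangle \cdot \pi \cdot \langle v_b, \ldots, v_{h_{st}}\rangle$ with $a \leq j < b$ and $\pi$ a $v_a$-to-$v_b$ path in $H$, of length $a + |\pi| + (h_{st} - b)$; the prefix and suffix lengths are known for free because $P_{st}$ is a shortest path. Fix a threshold $L$ to be optimized at the end of the analysis.

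For \textbf{short} detours ($|\pi| \leq L$), I would run $L$-bounded pipelined multi-source BFS in $H$ and in the reverse of $H$ from the $h_{st}$ path vertices. In a directed unweighted graph with bidirectional CONGEST communication this takes $\tilde{O}(h_{st} + L + D)$ rounds. Every vertex $v_b$ thereby learns $d_H(v_a, v_b)$ for every $v_a$ within $L$ hops. A single sweep along $P_{st}$ --- which maintains, at $v_j$, running minima of $a + d_H(v_a,v_b) + (h_{st}-b)$ over pairs $(a,b)$ with $a \leq j < b$ --- then delivers the best short-detour replacement-path length to each edge of $P_{st}$ in $O(h_{st})$ additional rounds.

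For \textbf{long} detours ($|\pi| > L$), I would sample a set $S$ of $\Theta((n/L)\log n)$ vertices so that with high probability every long detour contains a vertex of $S$, and moreover the nearest sampled vertex to $v_a$ along the shortest $v_a$-to-$v_b$ detour lies within $L$ hops of $v_a$. I would then run $L$-bounded forward and reverse BFS from each $u \in S$ in $H$, pipelined over the $|S|$ sources for $\tilde{O}(n/L + L)$ rounds. These bounded BFSs implicitly define a skeleton graph $K$ on $V(K) = S \cup V(P_{st})$ with a weighted edge $(x,y)$ of weight $d_H(x,y)$ whenever $d_H(x,y) \leq L$. The sampling guarantee yields $d_K(v_a,v_b) = d_H(v_a,v_b)$ for every long-detour pair, because the long shortest path decomposes into consecutive $\leq L$-hop segments whose endpoints all lie in $V(K)$.

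The main obstacle is extracting the $P_{st}$-to-$P_{st}$ distances in $K$ within the claimed complexity, since $K$ is dense and its edges are virtual. My plan is to run an $h_{st}$-source shortest-path computation on the small skeleton $K$ simulated on top of the underlying network --- using the edges discovered by the bounded BFSs and routing messages through a BFS tree of $G$ --- so that with proper pipelining the skeleton resolution costs $\tilde{O}(|V(K)| + D) = \tilde{O}(n/L + h_{st} + D)$ rounds. Finally, for each edge $(v_j, v_{j+1})$ the per-edge answer is the minimum of its short-detour and long-detour contributions, combined by one more sweep along $P_{st}$. Balancing the three cost terms --- short BFS $\tilde{O}(h_{st} + L)$, long BFS $\tilde{O}(n/L + L)$, and skeleton resolution $\tilde{O}(n/L + h_{st} + D)$ --- by choosing $L \approx \max\!\bigl(n^{1/3},\, \sqrt{n/h_{st}}\bigr)$ yields the claimed $\tilde{O}(n^{2/3} + \sqrt{nh_{st}} + D)$ round complexity.
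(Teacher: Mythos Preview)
Your overall plan---splitting detours by hop length, bounded BFS from the $P_{st}$ vertices for short detours, sampling plus bounded BFS for long detours---is exactly the structure the paper uses. The genuine gap is the step you call ``skeleton resolution.'' You assert that an $h_{st}$-source shortest-path computation on the virtual weighted graph $K$ (on vertex set $S\cup V(P_{st})$) can be carried out in $\tilde O(|V(K)|+D)$ rounds by ``routing messages through a BFS tree of $G$ with proper pipelining.'' No standard technique gives this. The skeleton can have $\Theta(|V(K)|^{2})$ weighted edges, and to compute exact multi-source distances on it you must either (i) make its edge set globally known, which costs $|E(K)|+D$ rounds of broadcast, or (ii) simulate a Bellman--Ford style relaxation, each of whose $\Theta(|V(K)|)$ rounds already requires pushing $\Theta(|V(K)|\cdot h_{st})$ values through the BFS tree. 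Either way the cost is quadratic in $|V(K)|$, not linear. A quick sanity check: under your cost model the simple choice $L=\sqrt n$ would yield $\tilde O(\sqrt n+h_{st}+D)$, strictly beating the target bound---this is a sign that the model undercounts.

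What the paper actually does for this step is to \emph{broadcast} the $\tilde O(|S|^{2}+|S|\,h_{st})$ bounded-hop distances $d'(v,u)$ with $v\in S$, $u\in S\cup V(P_{st})$, so that afterwards every node holds the entire skeleton and can compute all long-detour values $D(v_a,v_b)$ locally. This quadratic-in-$|S|$ broadcast term is the bottleneck that dictates the parameter choice: to keep $|S|^{2}+|S|\,h_{st}$ within the target, the paper takes $|S|=n^{1/3}$ (hence $L=n^{2/3}$) when $h_{st}<n^{1/3}$, and $|S|=\sqrt{n/h_{st}}$ (hence $L=\sqrt{n\,h_{st}}$) otherwise. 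Your threshold $L\approx\max(n^{1/3},\sqrt{n/h_{st}})$ is the reciprocal regime, giving $|S|\approx\min(n^{2/3},\sqrt{n\,h_{st}})$; with this large a sample the broadcast cost alone is already $\tilde\Omega(n^{4/3})$. So the fix is not cosmetic: you need to replace the unjustified $\tilde O(|V(K)|+D)$ step by the global broadcast, and then redo the balance with the $|S|^{2}+|S|\,h_{st}$ term included.
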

\begin{proof}  
    We focus on the analysis of Case 2 since Case 1 is straightforward.

    \vspace{0in}
    \noindent
    \textbf{\textit{Correctness:}} The inputs used by Algorithm~\ref{alg:localdetour} at vertex $a \in P_{st}$ are correctly computed in Algorithm~\ref{alg:dirunwrp}: the $h$-hop distances from $a$ to other vertices $b\in P_{st}$ and $h$-hop distances from sampled vertices are computed in line~\ref{alg:dirunwrp:short}. After Algorithm~\ref{alg:localdetour} correctly computes $d^a(s,t,e)$, line~\ref{alg:dirunwrp:globalrp} computes $d(s,t,e) = \min_{a\in P_{st}} d^a(s,t,e)$ for each edge $e$ as the minimum distance among all valid replacement paths which may deviate at any $a \in P_{st}$.

    \vspace{0.0in}
    \noindent
    \textbf{\textit{Round Complexity:}} Recall that local computation does not contribute to the cost of an algorithm in the CONGEST model. So we can ignore Algorithm~\ref{alg:localdetour} for the round complexity analysis.
    In line~\ref{alg:dirunwrp:short} of Algorithm~\ref{alg:dirunwrp}, we use the $k$-source $h$-hop BFS algorithm for directed graphs which runs in $O(k+h)$ rounds using pipelining~\cite{lenzen2019distributed,hoang2019round}. As we have $k=p+h_{st}$ sources and $h$ hops, this takes $O(p+h_{st}+h)$ rounds. We use the standard broadcast operation in line~\ref{alg:dirunwrp:broadcast} which broadcasts $(p+h_{st})\cdot p$ values in $O(p^2+p\cdot h_{st} +D)$ rounds~\cite{peleg2000distributed}. The global minimum in line~\ref{alg:dirunwrp:globalrp} involves $h_{st}$ convergecast operations which can be pipelined to take $O(h_{st}+D)$ rounds. The total round complexity is $O(p^2+p\cdot h_{st} +h+D)$.

    Setting parameters $h = n^{2/3}, p = n^{1/3}$ gives us a round complexity of $\tilde{O}(n^{2/3} + n^{1/3} h_{st} + D)$. When $h_{st} \ge n^{1/3}$, the parameters $h = \sqrt{nh_{st}}, p = \sqrt{n/h_{st}}$ are more favorable, giving a round complexity of $\tilde{O}(\sqrt{nh_{st}} + n/h_{st} + D) = \tilde{O}(\sqrt{nh_{st}} + D)$ (since $h_{st} \ge n^{1/3}$). The input parameter $h_{st}$ can be shared to all nodes with a broadcast, so all vertices can choose the setting of $h, p$ appropriately. Thus, Case 2 takes $\tilde{O}(n^{2/3} + \sqrt{nh_{st}} + D)$ rounds.
\end{proof}

We augment Algorithm~\ref{alg:dirunwrp}, which computes only weights, to also construct replacement paths using routing tables within the same running time in Section~\ref{sec:dirunwrppath}.

\paragraph{Approximate Directed Weighted RPaths}
\label{sec:approxdirrpub}

We present a $(1+\epsilon)$-approximation algorithm for directed weighted RPaths that runs in $\tilde{O}\left(\sqrt{nh_{st}} + D + \min\left(n^{2/3},h_{st}^{2/5}n^{2/5+o(1)}D^{2/5}\right)\right)$ rounds. 

\begin{proof}[Proof of Theorem~\ref{thm:dirrp}.\ref{thm:dirrp:approxub}]
    Our algorithm is based on the directed unweighted RPaths algorithm described earlier. The key tool is to replace $h$-hop BFS computation in line~\ref{alg:dirunwrp:short} of Algorithm~\ref{alg:dirunwrp} with $(1+\epsilon)$-approximate $h$-hop limited shortest path computation, using an algorithm in~(\cite{nanongkai2014approx}, Theorem~3.6), which gives us a $\tilde{O}(k+h)$-round algorithm for $k$ sources.

    With this change, approximate distances are computed in the skeleton graph in line~\ref{alg:dirunwrp:skeletondist} of Algorithm~\ref{alg:localdetour}. Thus, the local detour distances (both short and long) are $(1+\epsilon)$-approximate detour distances in line~\ref{alg:dirunwrp:detourfinal}. The final replacement paths add these approximate detours to exact distances (line~\ref{alg:dirunwrp:globalrp} of Algorithm~\ref{alg:dirunwrp}) and are hence $(1+\epsilon)$-approximate. Using the same analysis as Lemma~\ref{lem:dirumwrp}, we get an algorithm with round complexity $\tilde{O}\left(n^{2/3} + \sqrt{nh_{st}} + D\right)$.

    When $h_{st}$ is small, we can improve the $h_{st}\cdot SSSP$ round algorithm used in the exact unweighted algorithm. A recent result~\cite{mwcarxiv} shows that $k$-source approximate directed weighted SSSP can be performed in $\tilde{O}(\sqrt{nk} + D)$ rounds if $k \ge n^{1/3}$ and in $\tilde{O}(\sqrt{nk} + D + k^{2/5}n^{2/5+o(1)}D^{2/5})$ rounds if $k<n^{1/3}$. We compute all detours using an $h_{st}$-source SSSP computation by treating each $a \in P_{st}$ as a source and computing shortest path distances in $G-P_{st}$. This method is efficient when $h_{st}< n^{1/3}$. Combining the two methods proves our result.
\end{proof}

\subsubsection{Undirected RPaths Algorithm}
\label{sec:undirrpub}
We will adapt the classical sequential 2-SiSP algorithm of~\cite{katoh1982efficient} to obtain our CONGEST algorithms for 2-SiSP and RPaths in undirected graphs.
In particular, the following result is shown in~\cite{katoh1982efficient}.

\begin{lemma}[Streamlined Undirected RPaths Characterization \cite{katoh1982efficient}]
    Given an undirected graph $G=(V,E)$ and vertices $s, t$, let $P_s(s,u)$, $P_t(u, t)$ denote the shortest path in $G$ from $s$ to $u$ and $u$ to $t$ respectively for any vertex $u$. Then, any replacement path of an edge in the shortest path $P_{st}$ is of the form $P_s(s,u) \circ (u,v) \circ P_t(v,t)$ (where $\circ$ denotes concatenation) for some vertices $u, v \in V$ such that $(u,v)$ is an edge.
    \label{lem:undirrp}
\end{lemma}

 We now prove Theorem~\ref{thm:undirrp}.\ref{thm:undirrp:ub} by describing and analyzing our CONGEST algorithms for undirected weighted RPaths with round complexity $O(SSSP + h_{st})$ and for undirected weighted 2-SiSP with round complexity $O(SSSP)$.

 \begin{proof}[Proof of Theorem~\ref{thm:undirrp}.\ref{thm:undirrp:ub}]
    The high-level idea of our algorithm is to compute shortest path trees rooted at $s$ and $t$, and compute the weights of each candidate replacement path of the form described in Lemma~\ref{lem:undirrp}. Then, we determine which edges each candidate path replaces, and compute the minimum candidate for each edge on $P_{st}$.

    We use a CONGEST SSSP algorithm~\cite{cao2023sssp} algorithm to compute shortest path distances $\delta_{su}$ and $\delta_{ut}$ for every vertex $u$, i.e., the weight of paths $P_s(s,u)$ and $P_t(u,t)$ respectively (we reverse the graph to compute the $\delta_{ut}$ values). 
    Each node $v$ sends the value $\delta_{vt}$ to its neighbors. With this information, $u$ computes the weight of concatenated path $P_s(s,u) \circ (u,v) \circ P_t(v,t)$ as $\delta_{su}+w(u,v)+\delta_{vt}$. Thus, $u$ internally computes the weights of these candidate replacement paths. We now determine the edges $e \in P_{st}$ for which these candidates are valid replacement paths.

    We track where the path $P_s(s,u)$ diverges from $P_{st}$ and where path $P_t(v,t)$ rejoins $P_{st}$, we define $\alpha(u) \in P_{st}$ as the last vertex on $P_s(s,u)$ that is also on the input path $P_{st}$. Similarly, we define $\beta(u) \in P_{st}$ to be the first vertex on $P_t(u,t)$ path that is also on $P_{st}$. Thus, the path $\mathcal{P}_{uv}$ is a replacement path for all edges on the $\alpha(u)$-$\beta(v)$ subpath of $P_{st}$.

    Since all nodes know the identities of vertices on $P_{st}$, we compute $\alpha$ (and similarly $\beta$) values during the SSSP computation by keeping track of when the shortest path first diverges from (or last rejoins) $P_{st}$. We will have node $u$ determine the set of edges on the $\alpha(u)$-$\beta(v)$ subpath for the candidate replacement path $\mathcal{P}_{uv}$ for each neighbor $v$, and to do this each node $v$ sends the value $\beta(v)$ to all its neighbors in one round. Note that the entire path $P_{st}$ is known to all vertices as part of the input, and hence this can be done locally at $u$. With this information, $u$ will now locally determine the best replacement path of the form $\mathcal{P}_{uv}$ for each edge on $P_{st}$.

    Finally, we perform a global minimum over all candidate replacement paths for each edge on $P_{st}$. For a single edge on $P_{st}$, this involves a convergecast~\cite{peleg2000distributed} operation where each node $u$ sends its locally determined best replacement path distance for the edge and a global minimum is computed among these distances. This takes $O(D)$ for a single edge, and the computation for all $h_{st}$ edges on $P_{st}$ can be pipelined to take $O(h_{st}+D)$ rounds. The total round complexity of our algorithm is $O(SSSP+h_{st})$ (the $D$ factor is subsumed by $SSSP$ complexity but $h_{st}$ could be large in a weighted graph).

    For 2-SiSP it suffices to perform one convergecast computation with the minimum value at each node $u$, and we do not need to pipeline $h_{st}$ different operations. This gives an algorithm computing 2-SiSP weight in $O(SSSP)$ rounds.
\end{proof}

We have shown that undirected weighted 2-SiSP has the same asymptotic complexity as undirected weighted SSSP in the CONGEST model, and the RPaths algorithm uses an additional $O(h_{st})$ rounds to compute the $h_{st}$ replacement paths. RPaths seems to require these additional $h_{st}$ rounds since we need to report on $h_{st}$ different replacement path distances, unlike 2-SiSP. Apart from this additive factor, our upper and lower bounds for undirected graphs match the current best bounds for SSSP in the CONGEST model. 

For unweighted graphs, the result in Lemma~\ref{lem:undirrp} is strengthened in~\cite{katoh1982efficient} to show that the replacement path is the concatenation of suitable $P_s(s,u)$ and $P_t(u,t)$ without the need of an intervening edge. However, we will simply use Lemma~\ref{lem:undirrp} even for the unweighted case so that we have similar algorithms for both cases. The undirected unweighted bound is $O(D)$ for both RPaths and 2-SiSP since $h_{st}$ is at most $D$, and this bound is optimal.

To our knowledge, the undirected unweighted variant of distributed RPaths is the only problem considered in this paper that has been previously studied in the literature: \cite{ghaffari2016fault}~studies the more general problem of computing single source replacement path distances, where given a vertex $s\in V$ we want to compute $d(s,t,e)$ for each $t \in V$ and $e \in E$. They are able to compute these distances at vertex $t$ in $\tilde{O}(D)$ rounds by a randomized scheduling of BFS computations. Their techniques do not immediately extend to weighted graphs, or directed graphs but is nearly optimal for undirected unweighted graphs.

\section{Minimum Weight Cycle}
\label{sec:mwc}

\subsection{Lower Bounds for Minimum Weight Cycle}
\label{sec:mwclb}

\begin{figure}
    \centering
    \begin{minipage}{.4\columnwidth}
        \centering
        \tikzstyle{vertex}=[circle, draw=black,minimum size=20pt]
        \scalebox{0.7}{
        \begin{tikzpicture}
        
            \node[vertex] (l1) at (0,0) {$\ell_1$};
            \node[vertex] (l1p) at (2,0) {$\ell_1'$};
            \node[vertex] (r1) at (6,0) {$r_1$};
            \node[vertex] (r1p) at (8,0) {$r_1'$};
        
            \node[vertex] (l2) at (0,-1) {$\ell_2$};
            \node[vertex] (l2p) at (2,-1) {$\ell_2'$};
            \node[vertex] (r2) at (6,-1) {$r_2$};
            \node[vertex] (r2p) at (8,-1) {$r_2'$};
        
            \node[vertex] (lip) at (2,-2.5) {$\ell_i'$};
            \node[vertex] (rip) at (8,-2.5) {$r_i'$};
        
            \node[vertex] (lt) at (0,-4) {$\ell_k$};
            \node[vertex] (ltp) at (2,-4) {$\ell_k'$};
            \node[vertex] (rt) at (6,-4) {$r_k$};
            \node[vertex] (rtp) at (8,-4) {$r_k'$};
        
            \node (topmid) at (4,1.5) {};
            \node (botmid) at (4,-5) [label=left:$G_a$, label=right:$G_b$] {};
            
            \path[draw,thick,->] (l1) edge[bend left] (r1);
            \path[draw,thick,<-] (l1p) edge[bend left] (r1p);
        
            \path[draw,thick,->] (l2) edge[bend left] (r2);
            \path[draw,thick,<-] (l2p) edge[bend left] (r2p);
        
            \path[draw,thick,<-] (lip) edge[bend left] (rip);
            \path[draw,thick,<-] (l1) edge (lip) ;
            \path[draw,thick,->] (r1) edge (rip) ;
        
            \path[draw,thick,->] (lt) edge[bend left] (rt);
            \path[draw,thick,<-] (ltp) edge[bend left] (rtp);
        
            \path[draw=blue,dashed] (topmid) edge (botmid);
        \end{tikzpicture}
        }
        \caption{Directed unweighted MWC lower bound construction}
        % \Description{Figure depicting graph construction for directed MWC lower bound.}
        \label{fig:dirmwc} 
    \end{minipage}
    \hspace{0.1\textwidth}%
    \begin{minipage}{.4\columnwidth}
        \centering
        \tikzstyle{vertex}=[circle, draw=black,minimum size=20pt]
        \scalebox{0.7}{
        \begin{tikzpicture}
        
            \node[vertex] (l1) at (0,0) {$\ell_1$};
            \node[vertex] (l1p) at (2,0) {$\ell_1'$};
            \node[vertex] (r1) at (6,0) {$r_1$};
            \node[vertex] (r1p) at (8,0) {$r_1'$};
        
            \node[vertex] (l2) at (0,-1) {$\ell_2$};
            \node[vertex] (l2p) at (2,-1) {$\ell_2'$};
            \node[vertex] (r2) at (6,-1) {$r_2$};
            \node[vertex] (r2p) at (8,-1) {$r_2'$};
        
            \node[vertex] (lip) at (2,-2.5) {$\ell_i'$};
            \node[vertex] (rip) at (8,-2.5) {$r_i'$};
        
            \node[vertex] (lt) at (0,-4) {$\ell_k$};
            \node[vertex] (ltp) at (2,-4) {$\ell_k'$};
            \node[vertex] (rt) at (6,-4) {$r_k$};
            \node[vertex] (rtp) at (8,-4) {$r_k'$};
        
            \node (topmid) at (4,1.5) {};
            \node (botmid) at (4,-5) [label=left:$G_a$, label=right:$G_b$] {};
            
            \path[draw,thick,-] (l1) edge[bend left] (r1);
            \path[draw,thick,-] (l1p) edge[bend left] (r1p);
        
            \path[draw,thick,-] (l2) edge[bend left] (r2);
            \path[draw,thick,-] (l2p) edge[bend left] (r2p);
        
            \path[draw,thick,-] (lip) edge[bend left] (rip);
            \path[draw,thick,dashed] (l1) edge node[midway,above] {2}  (lip) ;
            \path[draw,thick,dashed] (r1) edge node[midway,above] {2}  (rip) ;
        
            \path[draw,thick,-] (lt) edge[bend left] (rt);
            \path[draw,thick,-] (ltp) edge[bend left] (rtp);
        
            \path[draw=blue,dashed] (topmid) edge (botmid);
        \end{tikzpicture}
        }
        \caption{Undirected weighted MWC lower bound construction}
        % \Description{Figure depicting graph construction for undirected weighted MWC lower bound.}
        \label{fig:undirmwc} 
    \end{minipage}%
\end{figure}

\subsubsection{Directed Minimum Weight Cycle}
\label{sec:dirmwclb}
We prove the near-linear lower bound for directed MWC and ANSC in Theorem~\ref{thm:dirmwc} using a reduction from Set Disjointness. We use the graph construction $G=(V,E)$ in Figure~\ref{fig:dirmwc}, and we use directed edges to ensure that the graph has short cycles if and only if the input sets are not disjoint. 

Consider an instance of the Set Disjointness problem where the players Alice and Bob are given $k^2$-bit strings $S_a$ and $S_b$ respectively. Construct $G$ with four sets of vertices: $L = \{\ell_i \mid 1 \le i \le k\}, L' = \{\ell_i' \mid 1 \le i \le k\}, R = \{r_i \mid 1 \le i \le k\}, R' = \{r_i' \mid 1 \le i \le k\}$. For each $1 \le i \le k$, add the directed edges $(\ell_i, r_i)$ and $(r_i', \ell_i')$. If $S_a[(i-1) \cdot k + j] = 1$, add the edge $(\ell_j', \ell_i)$ and if $S_b[(i-1) \cdot k + j] = 1$, add the edge $(r_i, r_j')$. We have directed the edges such that $\langle \ell_i, r_i, r_j', \ell_j' \rangle$ is a valid directed cycle when all the involved edges exist in $G$.

\begin{lemma}
    If $S_a \cap S_b \ne \phi$, $G$ has a directed cycle of length 4. If $S_a \cap S_b = \phi$, any directed cycle has length at least 8.
    \label{lem:dirmwclb}
\end{lemma}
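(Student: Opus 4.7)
The plan is to exploit the strict layered structure of the edge directions. I would first observe that each of the four vertex classes $L, L', R, R'$ has outgoing edges of only one ``type'': from any $\ell_i \in L$ the only outgoing edge is $(\ell_i, r_i)$ landing in $R$; from any $r_i \in R$ the only outgoing edges are the $S_b$-edges $(r_i, r_j')$ landing in $R'$; from any $r_i' \in R'$ the only outgoing edge is $(r_i', \ell_i')$ landing in $L'$; and from any $\ell_j' \in L'$ the only outgoing edges are the $S_a$-edges $(\ell_j', \ell_i)$ landing in $L$. Consequently, any directed cycle in $G$ must cycle through the classes in the fixed cyclic order $L \to R \to R' \to L' \to L$, and therefore every directed cycle has length that is a positive multiple of $4$.

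Given this observation the two halves of the lemma become short. For the first half, if $S_a \cap S_b \neq \emptyset$ pick an index $(i-1)k+j$ on which both strings are $1$. Then by construction the four edges $(\ell_i, r_i), (r_i, r_j'), (r_j', \ell_j'), (\ell_j', \ell_i)$ all exist, forming a directed $4$-cycle. For the second half, suppose for contradiction that $G$ contains a directed cycle of length less than $8$ while $S_a \cap S_b = \emptyset$. Since cycle lengths are multiples of $4$ and no cycle can have length $0$, such a cycle would have length exactly $4$. A length-$4$ cycle must use exactly one vertex from each of $L, R, R', L'$, and unrolling the forced structure it has to be of the form $\ell_i \to r_i \to r_j' \to \ell_j' \to \ell_i$ for some $i,j$; but the presence of the edges $(r_i, r_j')$ and $(\ell_j', \ell_i)$ forces $S_b[(i-1)k+j] = S_a[(i-1)k+j] = 1$, contradicting disjointness. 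Hence every directed cycle has length at least $8$.

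The only real verification to perform is the ``outgoing-edge classification'' per class, which is immediate from the construction since the only edges incident to $L$ and $R'$ are those coming from the always-present $(\ell_i, r_i)$ and $(r_i', \ell_i')$ edges together with the $S_a$ and $S_b$ conditional edges, and each of those is directed as specified. I do not anticipate a serious obstacle: the main point is simply to notice the $4$-layer directed cyclic structure, which collapses the girth analysis to a check of $4$-cycles, and the $4$-cycle case matches the Set Disjointness predicate exactly.
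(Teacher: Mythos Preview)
Your proposal is correct and follows essentially the same approach as the paper: both arguments trace the forced layer structure $L \to R \to R' \to L' \to L$ and conclude that a $4$-cycle would force $S_a[(i-1)k+j] = S_b[(i-1)k+j] = 1$. Your version is slightly cleaner in that you state upfront that every directed cycle has length a multiple of $4$, whereas the paper walks through the steps from $\ell_i$ and then remarks that returning to $\ell_i$ requires four more edges; but this is a presentational difference, not a different route.
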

\begin{proof}
    If the sets $S_a, S_b$ are not disjoint, then there exists $1\le i,j \le k$ such that $S_a[(i-1) \cdot k+j] = S_b[(i-1)\cdot k +j] = 1$, which means edges $(\ell_j', \ell_i)$ and $(r_i, r_j')$ exist in $G$. Hence, $\langle \ell_i, r_i, r_j', \ell_j' \rangle$ is a valid directed cycle of length 4 in $G$.

    Assume that the strings are disjoint. If we start our cycle (without loss of generality) with edge $(\ell_i, r_i)$, the next edge has to be $(r_i, r_j')$ for some $j$ such that $S_b[(i-1)\cdot k+j]=1$. From $r_j'$, we have only one outgoing edge to $\ell_j'$. From $\ell_j'$, we can take an edge $(\ell_j', \ell_p)$ for some $p$ such that $S_a[(p-1)\cdot k+j]=1$. Since the strings are disjoint and $S_b[(i-1)\cdot k+j]=1$, $p \ne i$. This means there is no cycle of length 4. It is easy to see that we can return to $\ell_i$ only after 4 more edges, which means any cycle in $G$ has length at least 8.
\end{proof}

\begin{proof}[Proof of Theorem \ref{thm:dirmwc}]
    Assume that there is a CONGEST algorithm $\mathcal{A}$ for directed unweighted MWC taking $R(n)$ rounds on a graph of $n$ vertices. Consider the partition$(V_a, V_b)$ of $G$ with $V_a$ containing vertex set $L \cup L'$ and $V_b$ containing vertex set $R \cup R'$. Alice simulates the subgraph $G_a$ induced by vertices in $V_a$ and Bob simulates the subgraph $G_b$ induced by vertices in $G_b$. Alice and Bob simulate $\mathcal{A}$ by communicating any information that $\mathcal{A}$ sends across edges in the cut separating $V_a, V_b$ and any communication within $G_a$ or $G_b$ are simulated by Alice and Bob internally. Since the cut has $4k$ edges and $\mathcal{A}$ is allowed to send up to $O(\log n)$ bits across an edge per round, simulating the computation of $\mathcal{A}$ on $G$ takes at most $O(4k \cdot \log n \cdot R(n))$ bits of communication between Alice and Bob. After simulating the MWC algorithm, we use Lemma~\ref{lem:dirmwclb} to determine if the sets are disjoint by checking if the resulting minimum length cycle has length at least 4. Since any communication protocol for Set Disjointness must use at least $\Omega(k^2)$ bits and $n = 4k$, we have shown that $R(n)$ is at least $\Omega\left(\frac{k^2}{4k \log n}\right) = \Omega\left(\frac{n}{\log n}\right)$. Clearly, this lower bound also applies to directed weighted graphs.
    
    Since $\mathcal{A}$ is able to distinguish between the presence of 4-cycles and cycles of length at least 8 in $G$, this lower bound also applies to $(2-\epsilon)$-approximation algorithms for computing MWC.
\end{proof}

\subsubsection{Undirected Weighted Minimum Weight Cycle}
\label{sec:undirmwclb}

We now prove the near-linear lower bound in Theorem~\ref{thm:undirmwc}.\ref{thm:undirmwc:lb} for undirected weighted MWC and ANSC. We use the graph construction $G=(V,E)$ in Figure~\ref{fig:undirmwc} and we use weights to ensure that the graph has short cycles if and only if the input sets are not disjoint. Consider an instance of the Set Disjointness problem where the players Alice and Bob are given $k^2$-bit strings $S_a$ and $S_b$. Construct $G=(V,E)$ with four sets of vertices: $L = \{\ell_i \mid 1 \le i \le k\}, L' = \{\ell_i' \mid 1 \le i \le k\}, R = \{r_i \mid 1 \le i \le k\}, R' = \{r_i' \mid 1 \le i \le k\}$. We add edges $(\ell_i, r_i)$ and $(\ell_i', r_i')$ for each $1 \le i \le k$. Each of these edges has weight 1. Add the edge $(\ell_i, \ell_j')$ if $S_a[(i-1) \cdot k + j] = 1$, and similarly add the edge $(r_i, r_j')$ if $S_b[(i-1) \cdot k + j] = 1$. These edges have weight 2.

\begin{lemma}
    If $S_a \cap S_b \ne \phi$, $G$ has a cycle of weight 6. If $S_a \cap S_b = \phi$, any cycle in $G$ has weight at least 8.
    \label{lem:undirmwclb}
\end{lemma}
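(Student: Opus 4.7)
The plan is to split on whether $S_a \cap S_b = \emptyset$. For the intersecting case I would exhibit a concrete $4$-cycle of weight $6$; for the disjoint case I would establish a weight lower bound of $8$ on every cycle via a short structural argument about $G$. If some index $(i-1)k+j$ lies in both $S_a$ and $S_b$, then the four edges $(\ell_i, r_i)$, $(r_i, r_j')$, $(r_j', \ell_j')$, $(\ell_j', \ell_i)$ all exist and form a $4$-cycle of total weight $1+2+1+2 = 6$.

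For the disjoint case I would rely on three structural facts about $G$. First, $G$ is bipartite under the partition $(L \cup R',\, R \cup L')$, since each of the four edge types connects the two sides; hence every cycle has even hop length $\ell \ge 4$. Second, the weight-$1$ edges form a perfect matching on $V(G)$, as each vertex has a unique weight-$1$ neighbor, so any cycle contains at most $\ell/2$ weight-$1$ edges. Third, under the alternative partition $(L \cup L',\, R \cup R')$ the weight-$1$ edges are exactly the cut edges, so any cycle contains an \emph{even} number of weight-$1$ edges. Letting $k_1$ denote that number in a cycle of length $\ell$, the total cycle weight equals $k_1 + 2(\ell - k_1) = 2\ell - k_1$ with $k_1 \le \ell/2$ and $k_1$ even. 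For $\ell \ge 6$ these constraints force $k_1 \le 2$ and weight $\ge 2 \cdot 6 - 2 = 10 > 8$, so the only delicate case is $\ell = 4$, where $k_1 \in \{0,2\}$ gives weight $8$ or $6$ respectively.

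The main obstacle is ruling out weight-$6$ cycles ($\ell = 4$, $k_1 = 2$) under disjointness. Here I would argue that the two weight-$1$ matching edges in such a cycle must come from different matching classes: if both were $L$-$R$ edges, all four cycle vertices would lie in $L \cup R$, yet every weight-$2$ edge incident to such vertices leaves $L \cup R$ (going to $L'$ or $R'$), so the cycle cannot be completed by two weight-$2$ edges; the symmetric argument rules out two $L'$-$R'$ matching edges. Hence the four cycle vertices must be $\ell_i, r_i, \ell_j', r_j'$ for some $i, j$, and the traversal order is forced to be $\ell_i, r_i, r_j', \ell_j', \ell_i$ (up to reversal), because the only weight-$2$ neighbors of $r_i$ lie in $R'$ and the only weight-$2$ neighbors of $\ell_j'$ lie in $L$. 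This requires both $(r_i, r_j')$ and $(\ell_j', \ell_i)$ to be present, i.e., $S_b[(i-1)k+j] = 1$ and $S_a[(i-1)k+j] = 1$, contradicting disjointness and completing the argument.
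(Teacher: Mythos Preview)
Your proof is correct and follows essentially the same approach as the paper's: both exhibit the explicit weight-$6$ four-cycle in the intersecting case, use bipartiteness together with the matching structure of the weight-$1$ edges to dispose of cycles with six or more edges, and then argue that any weight-$6$ four-cycle must be of the form $\ell_i, r_i, r_j', \ell_j'$, forcing a common index in $S_a \cap S_b$. One small slip to fix: your claim that $k_1 \le 2$ for all $\ell \ge 6$ is only literally correct at $\ell = 6$ (e.g.\ at $\ell = 8$ one could have $k_1 = 4$), but since $2\ell - k_1 \ge 3\ell/2 \ge 9$ regardless, the conclusion is unaffected.
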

\begin{proof}
    If the sets $S_a, S_b$ are not disjoint, then there exists $1\le i,j \le k$ such that $S_a[(i-1) \cdot k+j] = S_b[(i-1)\cdot k +j] = 1$, and hence the edges $(\ell_i, \ell_j')$ and $(r_i, r_j')$ exist in $G$. So, $G$ contains the cycle $\langle \ell_i, \ell_j', r_j', r_i, \ell_i \rangle$ which has weight 6.

    Now assume the sets are disjoint. First, notice that $G$ is bipartite and there is at most one edge of weight 1 incident to any vertex, which means any cycle in $G$ cannot have more 1-weight edges than 2-weight edges. So, any cycle of 6 or more edges has weight at least 9, and we can restrict our attention to cycles with 4 or fewer edges. Consider any cycle containing the edge $(l_i, r_i)$. It is easy to see that any such cycle must have at least 4 edges, and the only such cycle with 4 edges is of the form $\langle l_i, r_i, r_j', l_j' \rangle$ for some $j$. But, such a cycle exists in $G$ only if $S_a[(i-1) \cdot k + j] = 1$ and $S_b[(i-1) \cdot k + j] = 1$, which contradicts the assumption that $S_a \cap S_b = \phi$. The other type of cycle with 4 edges in this graph is of the form $ \langle l_i, l_j' , l_p , l_q' \rangle$, which has weight 8 since each edge in that cycle has weight 2. So, any cycle in $G$ has weight at least 8.
\end{proof}

\begin{proof}[Proof of Theorem~\ref{thm:undirmwc}.\ref{thm:undirmwc:lb}]
    To complete the reduction from Set Disjointness to MWC, assume that there is a CONGEST algorithm $\mathcal{A}$ for undirected MWC taking $R(n)$ rounds on a graph of $n$ vertices. 
    Consider the partition $(V_a, V_b)$ of $G$ with $V_a$ containing vertex set $L \cup L'$ and $V_b$ containing vertex set $R \cup R'$ and corresponding induced subgraphs $G_a, G_b$. Alice and Bob simulate $\mathcal{A}$ by communicating any information that $\mathcal{A}$ sends across edges in the cut separating $V_a, V_b$. Since the cut has $4k$ edges, simulating the computation of $\mathcal{A}$ on $G$ takes at most $O(4k \cdot \log n \cdot R(n))$ bits of communication between Alice and Bob. After simulating the MWC algorithm, we can use Lemma~\ref{lem:undirmwclb} to determine if the sets are disjoint by checking if the resulting minimum weight cycle has weight at least 6. Since any communication protocol must use at least $\Omega(k^2)$ bits and $n = 4k$, we have shown that $R(n)$ is $\Omega\left(\frac{n}{\log n}\right)$.

    Note that we can change the weight 2 on $(\ell_i, \ell_j')$ edges to any $w \ge 2$, which would mean our algorithm needs to distinguish between cycles of weight $2+2w$ and $4w$. Choosing a large enough $w$, we get the same hardness result for a $(2-\epsilon)$-approximation algorithm for undirected weighted MWC, for any constant $\epsilon > 0$.
\end{proof}

Lower bounds for computing undirected unweighted MWC has been studied previously with an $\Omega\left(\frac{\sqrt n}{\log n}\right)$ lower bound for exact computation of MWC that even applies to $(2-\epsilon)$-approximation~\cite{frischknecht2012, korhonen2017, drucker2014}.

\subsection{Upper Bounds for Minimum Weight Cycle}
\label{sec:mwcub}
\label{sec:undirmwcub}

\noindent
\textbf{Directed Graphs.}
As mentioned in the introduction, our algorithm for computing the MWC of directed graphs is straightforward: once we have computed all distances $\delta_{uv}$ using an APSP algorithm, computing $\min_{(u,v) \in E} \{\delta_{uv} + w(v,u)\}$ gives us the weight of the MWC of the graph. Computing this global minimum can be done in $O(D)$ rounds. So, we have an $O(APSP + D) = \tilde{O}(n)$-round algorithm for MWC on directed graphs, matching the lower bound up to logarithmic factors. We can also readily compute the smallest cycle passing through a node $v$ by computing $\min_{u: (u,v) \in E} \{\delta_{uv} + w(v,u)\}$, solving the ANSC problem in $O(APSP) = \tilde{O}(n)$ rounds.

\vspace{0.2cm}
\noindent
\textbf{Undirected Graphs.}
In the undirected case, computing MWC and ANSC needs a little more work since edges can be traversed in both directions, but we give near-linear algorithms for MWC and ANSC, stated in Theorem~\ref{thm:undirmwc}.\ref{thm:undirmwc:exact}, based on the characterization in Lemma \ref{lem:undir_cycle}.

\begin{lemma}
    Let $P_1 = \langle v_1, v_2, \dots v_i \rangle$ and $P_2 = \langle v_1, v_2', \dots v_j'\rangle$ be simple shortest paths between $v_1, v_i$ and $v_1,v_j'$ respectively such that $v_2 \ne v_2'$, $v_i \ne v_j'$ and $(v_i, v_j')$ is an edge in the graph. Then, $P_1 \cup P_2 \cup \{(v_i, v_j')\}$ contains a simple cycle passing through $v_1$ with weight at most $w(P_1) + w(P_2) + w(v_i, v_j')$. Further, a minimum weight cycle through $v_1$ will be of this form
    for some edge $(v_i,v_j')$.
    \label{lem:undir_cycle}
\end{lemma}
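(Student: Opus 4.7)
The plan is to prove Part 1 by case analysis on how $P_1$ and $P_2$ overlap, and Part 2 by identifying a specific cut edge of the MWC at which the Part 1 construction recovers $C$. Throughout, the natural object to study in Part 1 is the closed walk $W = P_1 \circ (v_i,v_j') \circ P_2^{-1}$, which has total weight $w(P_1) + w(v_i,v_j') + w(P_2)$, and it suffices to produce a simple cycle through $v_1$ of weight at most $w(W)$ inside the subgraph $P_1 \cup P_2 \cup \{(v_i,v_j')\}$.

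For Part 1, I would first dispatch the easy case $V(P_1) \cap V(P_2) = \{v_1\}$: here $v_2 \neq v_2'$ forces $P_1$ and $P_2$ to leave $v_1$ on different edges, and $v_i \neq v_j'$ prevents the bridging edge from collapsing the walk, so $W$ itself is already the required simple cycle through $v_1$ with weight exactly $w(P_1)+w(P_2)+w(v_i,v_j')$. Otherwise, pick $y \in (V(P_1) \cap V(P_2)) \setminus \{v_1\}$ that is closest to $v_1$ along $P_1$, and let $Q_1, Q_2$ be the $v_1$-to-$y$ prefixes of $P_1, P_2$. Because $P_1, P_2$ are simple shortest paths, their prefixes $Q_1, Q_2$ are themselves shortest $v_1$-$y$ paths, and $w(Q_1)=w(Q_2)=d(v_1,y)$. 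The minimality of $y$ rules out any additional common vertex of $Q_1, Q_2$, so $V(Q_1) \cap V(Q_2) = \{v_1, y\}$; combined with $v_2 \neq v_2'$ this makes $Q_1 \circ Q_2^{-1}$ a simple cycle through $v_1$ of weight $2d(v_1,y) \le w(P_1)+w(P_2) \le w(P_1)+w(P_2)+w(v_i,v_j')$.

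For Part 2, write the MWC as $C = v_1, u_1, \ldots, u_{k-1}, v_1$ with $a := u_1$ and $b := u_{k-1}$. For each non-incident edge $e_p = (u_p, u_{p+1})$ of $C$ (i.e.\ $1 \le p \le k-2$), let $F_p$ be the forward half from $v_1$ to $u_p$ (first edge $(v_1,a)$) and $B_{p+1}$ be the backward half from $v_1$ to $u_{p+1}$ (first edge $(v_1,b)$); these satisfy $w(F_p)+w(e_p)+w(B_{p+1})=w(C)$ and automatically have different first edges and different endpoints. The goal is to exhibit $p$ for which both $F_p$ and $B_{p+1}$ are shortest paths from $v_1$: then Part 1 applied with $P_1 = F_p$, $P_2 = B_{p+1}$ yields a simple cycle through $v_1$ of weight at most $w(F_p)+w(B_{p+1})+w(e_p)=w(C)$, which by MWC optimality equals $w(C)$ and witnesses $C$ as being of the claimed form. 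I would take $p^* := \max\{p \in \{1,\ldots,k-1\} : w(F_p) = d(v_1,u_p)\}$, which is well defined since $F_1=(v_1,a)$ is trivially shortest, and argue that $B_{p^*+1}$ must also be shortest. Indeed, if some strictly shorter $v_1$-$u_{p^*+1}$ path $P'$ existed and had first edge different from $(v_1,a)$, then Part 1 applied to $F_{p^*}, P', e_{p^*}$ would output a simple cycle through $v_1$ of weight $d(v_1,u_{p^*})+w(e_{p^*})+w(P') < w(C)$, contradicting the minimality of $C$.

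The main obstacle will be the residual sub-case in which every shorter alternative path $P'$ to $u_{p^*+1}$ begins with $(v_1,a)$, together with the boundary case $p^* = k-1$ where the forward half already reaches $b$. My plan for these is a symmetric exchange: alternatively define $q^* := \min\{p \ge 1 : w(B_{p+1}) = d(v_1,u_{p+1})\}$ and argue on $F_{q^*}$, or equivalently select the cut edge of $C$ maximizing $d(v_1,u_p)+d(v_1,u_{p+1})+w(e_p)$; combining $w(F_p)\ge d(v_1,u_p)$, $w(B_{p+1})\ge d(v_1,u_{p+1})$, and the MWC inequality must force simultaneous equality at that edge, yielding the desired $p$ while preserving the crucial distinct-first-edge condition inherited from $a \neq b$.
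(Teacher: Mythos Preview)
Your Part~1 argument is correct, though more explicit than the paper's: the paper simply notes that in the closed walk $W = P_1 \circ (v_i, v_j') \circ P_2^{-1}$ the vertex $v_1$ occurs only at the endpoints (since $P_1,P_2$ are simple paths from $v_1$) and with distinct neighbours $v_2 \neq v_2'$, so short-cutting internal repetitions yields a simple cycle through $v_1$ of weight at most $w(W)$.

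Your Part~2 has a real gap. The assertion that $F_1 = (v_1,a)$ is ``trivially shortest'' is false in weighted graphs: a single edge need not be a shortest path. Take the triangle on $v_1,a,b$ with $w(v_1,a)=100$ and $w(a,b)=w(b,v_1)=1$. Here $C$ is the unique (hence minimum-weight) cycle through $v_1$, yet $d(v_1,a)=2$ and $d(v_1,b)=1$, so neither $F_1$ nor $F_2$ is shortest and your $p^*$ is undefined; by symmetry so is $q^*$. Your final fallback---choose $p$ maximizing $d(v_1,u_p)+d(v_1,u_{p+1})+w(e_p)$ and invoke an ``MWC inequality''---is circular: the lower bound $d(v_1,u_p)+d(v_1,u_{p+1})+w(e_p) \ge w(C)$ would come from Part~1, but Part~1 requires shortest paths to $u_p,u_{p+1}$ with \emph{distinct} first edges, and in the triangle above every shortest path from $v_1$ begins with $(v_1,b)$, so Part~1 never applies and no contradiction is available.

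The paper takes a different route that sidesteps these obstacles. It selects the \emph{critical edge} $(v_i,v_{i+1})$ of $C$, namely the edge for which both arcs of $C$ from $v_1$ have length at most $\lfloor w(C)/2\rfloor$, and shows each arc is a shortest path: if a strictly shorter path $P$ to $v_i$ existed, then since the two arcs leave $v_1$ on different edges $v_2,v_p$, the first edge of $P$ differs from at least one of them, and concatenating $P$ with that arc produces a closed walk through $v_1$ of weight strictly below $w(C)$ (the bound $d_C(v_1,v_i)\le\lfloor w(C)/2\rfloor$ is exactly what handles the case where $P$ shares its first edge with the same-side arc). This midpoint property is the symmetry your $p^*$-approach lacks. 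Note also that the critical edge may be incident to $v_1$, in which case one of the two paths is trivial; your explicit restriction to non-incident edges $1\le p\le k-2$ excludes precisely this case, which is the relevant one in the triangle example.
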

\begin{proof}
    Consider the walk obtained by traversing $P_1$, the edge $(v_i, v_j')$ and $P_2$ in reverse. This is a walk starting and ending in $v_1$, and contains the subwalk $(v_2, v_1, v_2')$ with $v_2 \ne v_2'$. So, the walk contains a simple cycle passing through $v_1$. Now, we prove that the minimum weight cycle through $v_1$ is of this form.

    Let $C = (v_1, v_2, \dots v_p)$ be the smallest cycle containing $v_1$. Let $d_C(v_1,v_i)$ denote the distance from $v_1$ to $v_i$ along the cycle $C$, and let $(v_i, v_{i+1})$ be the critical edge of $C$ with respect to $v_1$, i.e, it is the edge such that $\lceil w(C)/2 \rceil - w(v_i, v_{i+1}) \le d_C(v_1, v_i) \le \lfloor w(C)/2 \rfloor$ and $\lceil w(C)/2 \rceil - w(v_i, v_{i+1}) \le d_C(v_{i+1}, v_1) \le \lfloor w(C)/2 \rfloor$. We claim that $d_C(v_1, v_i)$ and $d_C(v_{i+1}, v_1)$ are shortest paths in $G$.

    Assume $d_C(v_1, v_i)$ is not a shortest path in $G$, and $P$ is a shortest path with weight $\delta_{v_1 v_i} < d_C(v_1, v_i)$. Let the vertex next to $v_1$ in $P$ be $v_k$. In the case that $v_k \ne v_p$, consider the walk consisting of $P$ along with the section of the cycle from $v_i$ to $v_1$. This is a walk starting and ending at $v_1$, and must contain a simple cycle containing $v_1$ --- this cycle has weight $< \delta_{v_1 v_i} + d_C(v_i,v_1) < w(C)$, which is a contradiction. If $v_k=v_p$, consider the walk consisting of $P$ along with the section of the cycle from $v_1$ to $v_i$ --- this walk contains a simple cycle of weight $< w(C)$ since the walk has weight $\delta_{v_1 v_i} + d_C(v_1, v_i) < 2\lfloor w(C)/2 \rfloor \le w(C)$ by definition of the critical edge. So, if we set $P_1$ to be the part of $C$ from $v_1$ to $v_i$ and $P_2$ to be the part from $v_{i+1}$ to $v_1$, then $P_1, P_2$ are shortest paths and the minimum weight cycle through $v_1$ has the form $P_1 \cup P_2 \cup \{(v_i, v_j')\}$.
\end{proof}

\begin{proof}[Proof of Theorem~\ref{thm:undirmwc}.\ref{thm:undirmwc:exact}]
    For our ANSC algorithm, we first compute APSP on the given graph with the modification that each vertex $v$ knows the shortest path distance $\delta_{uv}$ from every other vertex $u$, along with the first vertex $First(u,v)$ after $u$ in the path. We can readily modify the known APSP algorithms such as the one from~\cite{bernsteinapsp} to track this information. Then, each vertex sends the distance and first vertex information about all $n$ of its shortest paths to all its neighbors in $O(n)$ rounds. Now, each vertex $v$, for each neighbor $v' \in N(v)$, computes candidate cycles through some vertex $u$ consisting of paths from $(v,u)$ and $(u,v')$ along with the $(v,v')$ edge. To ensure this cycle contains a simple cycle passing through $u$, we check if $First(u,v) \ne First(u,v')$. 
    If this property holds we assign the cycle a weight of $\delta_{uv} + \delta_{uv'} + w(v,v')$ --- we call this a valid cycle passing through $u$.
    
    We compute the minimum valid cycle through $u$ by propagating the valid cycles at each node, taking the minimum at each step. This takes $O(n+D)$ rounds for all vertices $u$, giving a total round complexity of $O(APSP + n)$. For weighted graphs, this is an $\tilde{O}(n)$ round algorithm using the APSP algorithm of~\cite{bernsteinapsp}. For unweighted graphs, we can use an $O(n)$ round algorithm, e.g.,  \cite{lenzen2019distributed} to get an $O(n)$ round bound. Finally, for MWC we compute the global minimum of the ANSC values in $O(D)$ rounds.
\end{proof}

\subsection{Upper Bounds for Approximate Undirected Minimum Weight Cycle}

\subsubsection{Approximate Undirected Unweighted MWC}
\label{sec:approxundirmwcub}

\begin{algorithm}[t]
    \caption{2-Approximation Algorithm for Girth (Undirected Unweighted MWC Length)}
    \begin{algorithmic}
        \State 0: Construct set $S$ by sampling each vertex $v \in G$ with probability $\Theta(\log n/ \sqrt{n})$.
        \State 1: \textbf{for} {vertex $v \in G$}
        \State \;\; 1.A: Compute the $\sqrt{n}$-neighborhood for vertex $v$, i.e., the $\sqrt{n}$ closest vertices to $v$.
        \State \;\; 1.B: For each non-tree edge in the partial shortest path tree computed, record a candidate cycle.
        \State 2: \textbf{for} {vertex $w \in S$} 
        \State \;\; {2.A}: Perform BFS starting from $w$.
        \State \;\; {2.B}: For each non-tree edge in the BFS tree computed, record a candidate cycle.
        \State {3}: Return the minimum weight cycle among all recorded candidate cycles.
    \end{algorithmic}
    \label{alg:approxundirmwcub}
\end{algorithm}
    
We present an algorithm to compute an approximation of MWC length (or {\it girth}) in an undirected unweighted graph. A more detailed pseudocode and analysis of our algorithm can be found in Appendix A of our recent paper that focuses on MWC~\cite{mwcarxiv}. 

A CONGEST algorithm in~\cite{peleg2013girth} computes a $(2-\frac{1}{g})$-approximation of girth in $O(\sqrt{ng} + D)$ rounds, where $g$ is the girth. We significantly improve this result by removing the dependence on $g$ to give an $\tilde{O}(\sqrt{n} + D)$ round algorithm, proving Theorem~\ref{thm:undirmwc}.\ref{thm:undirmwc:approxunw}. We first present a 2-approximation algorithm, and then indicate how to improve it to a $(2-\frac{1}{g})$-approximation.

Our 2-approximation algorithm is given in Algorithm \ref{alg:approxundirmwcub}.
Algorithm \ref{alg:approxundirmwcub} samples $\tilde{O}(\sqrt{n})$ vertices and performs two computations. In Line \textbf{1}, the algorithm finds $\sqrt{n}$ closest vertices for each vertex $v$ in the graph and computes an upper bound on the minimum length of a cycle reachable from $v$ and contained entirely within this $\sqrt{n}$-neighborhood. In Line \textbf{2}, the algorithm performs a full BFS from the sampled vertices to compute 2-approximations of cycles `near' a sampled vertex. When a non-tree edge is encountered during the distance computations in lines~\textbf{1.A} and~\textbf{2.A}, the endpoints of the non-tree edge $(x,y)$ get two distance estimates $\delta_{wx}, \delta_{wy}$ from source $w$. Then, $\delta_{wx}+\delta_{wy}+1$ gives an upper bound on the length of a simple cycle containing $(x,y)$ --- these are the candidate cycles computed in lines~\textbf{1.B} and~\textbf{2.B}.

First, we argue the correctness of the algorithm. Consider the case when all vertices in a minimum weight cycle $C$ are contained in the $\sqrt{n}$-neighborhood of some vertex $v$ in $C$. When computing the $\sqrt{n}$-neighborhood of $v$, an edge of $C$ furthest from $v$ is a non-tree edge reachable from $v$ in its shortest path tree, and $C$ is computed as one of the candidate cycles in line~\textbf{1.B}. In this case, the MWC length is computed exactly. We prove a 2-approximation in the other case with the following lemma.

\begin{lemma}
    If no vertex $v$ in a minimum weight cycle $C$ contains all vertices of the cycle in its $\sqrt{n}$-neighborhood, lines \textbf{2.A} - \textbf{2.B} compute a 2-approximation of the length of $C$.
\end{lemma}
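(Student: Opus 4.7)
The plan is to exhibit, with high probability, a sampled vertex $w \in S$ lying within distance $\lfloor g/2 \rfloor$ of some vertex of $C$, and then argue that the BFS from $w$ performed in lines~\textbf{2.A}--\textbf{2.B} produces a candidate cycle of length at most $2g$, where $g$ denotes the length of $C$.

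First I would translate the hypothesis into a density statement for balls around vertices of $C$. Fix any $v \in C$. The hypothesis gives some $u \in C$ that is not in the $\sqrt{n}$-neighborhood of $v$. Since $u$ and $v$ both lie on a cycle of length $g$, $d(v,u) \leq \lfloor g/2 \rfloor$ in $G$. Because $u$ is not among the $\sqrt{n}$ closest vertices to $v$, at least $\sqrt{n}$ vertices lie at distance at most $d(v,u)$ from $v$, so $|B_v(\lfloor g/2 \rfloor)| \geq \sqrt{n}$. With sampling probability $p = \Theta(\log n / \sqrt{n})$, the probability that no vertex of this ball is sampled is at most $(1-p)^{\sqrt{n}} = n^{-\Omega(1)}$, so w.h.p.\ there exists $w \in S \cap B_v(\lfloor g/2 \rfloor)$.

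Next I would analyze the BFS from such a $w$ in line~\textbf{2.A}. Since its BFS tree $T_w$ is acyclic and $C$ has $g \geq 3$ edges, at least one edge $(x,y)$ of $C$ is a non-tree edge, and line~\textbf{2.B} records a candidate cycle of length $\delta_{wx} + \delta_{wy} + 1$. The core bound is
\[
\delta_{wx} + \delta_{wy} + 1 \;\leq\; 2\,d(w,v) + d(v,x) + d(v,y) + 1 \;\leq\; 2\lfloor g/2 \rfloor + (g-1) + 1 \;\leq\; 2g,
\]
using the triangle inequality, $d(w,v) \leq \lfloor g/2 \rfloor$, and the fact that the two arcs of $C$ from $v$ to the endpoints of $(x,y)$ together cover all of $C$ except the edge $(x,y)$, giving $d(v,x) + d(v,y) \leq g - 1$. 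Since any recorded candidate length upper-bounds the length of an actual simple cycle (the concatenation of the tree paths from $w$ to $x$ and from $w$ to $y$ with the edge $(x,y)$ contains such a cycle), the minimum returned in line~\textbf{3} lies in $[g, 2g]$, yielding a 2-approximation.

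The step I expect to require most care is pinning down the right radius in the density argument: a naive use of the hypothesis with radius $g$ would only deliver roughly a $3$-approximation. The factor of $2$ relies on the geometric observation that any two vertices on a cycle of length $g$ are within graph distance $\lfloor g/2 \rfloor$ of each other, which halves the ball radius and ties the candidate cycle bound exactly to $2g$. Once this observation is in place, the sampling and BFS ingredients compose essentially automatically via the triangle inequality.
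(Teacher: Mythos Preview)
Your proposal is correct and follows essentially the same approach as the paper: both arguments use the cycle geometry to place a sampled vertex $w$ within distance $\lfloor g/2\rfloor$ of some $v\in C$, then bound the candidate cycle at the non-tree edge via the triangle inequality and the observation that the distances from $v$ to the two endpoints of that edge sum to at most $g-1$. The only cosmetic difference is that the paper samples directly from the $\sqrt{n}$-neighborhood of $v$ and then bounds $\delta_{vw}$ by comparison with the missing cycle vertex, whereas you first argue $|B_v(\lfloor g/2\rfloor)|\geq\sqrt{n}$ and sample from that ball; the resulting bound $\delta_{vw}\le\lfloor g/2\rfloor$ and the remainder of the argument are identical.
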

\begin{proof}
    Let $g$ be the length of $C$. Choose a vertex $v \in C$, and let its $\sqrt{n}$ neighborhood be $A$. Since $A$ does not contain all vertices in $C$, $A$ must have size $\sqrt{n}$ and hence w.h.p. $A$ contains some sampled vertex $w \in S$. Let $x$ be a vertex in $C$ but not in $A$, then we have $\delta_{vw} \le \delta_{vx}$ as $w$ is one of the $\sqrt{n}$ closest vertices to $v$ and $x$ is not. Also since $C$ has length $g$, $\delta_{vx} \le \lfloor g/2 \rfloor$ and hence $\delta_{vw} \le \lfloor g/2 \rfloor$.

    Now, line~\textbf{2.A} performs a BFS with $w$ as the source. There is some edge $(u,u')$ of $C$ that is a non-tree edge of the BFS tree rooted at $w$, and the candidate cycle recorded by this edge has length $\delta_{uw} + \delta_{u'w} + 1$. Using the triangle inequality $\delta_{uw} \le \delta_{uv} + \delta_{vw} = \delta_{uv} + \lfloor g/2 \rfloor$ (and similarly for $u'$), this candidate cycle has length at most $2\lfloor g/2 \rfloor + (\delta_{uv} + \delta_{u'v} + 1) \le 2g$. Note that $\delta_{uv} + \delta_{u'v} + 1 \le g$ since $u,u',v$ are all on $C$. So, line~\textbf{2.B} computes a 2-approximation of the MWC length.
\end{proof}

Now, we analyze the number of rounds. For line~\textbf{1.A}  we use an $(R+h)$-round algorithm in~\cite{lenzen2019distributed} to compute for each vertex its  $R$ closest neighbors within $h$ hops (the {\it source detection} problem); here we have $R = \sqrt{n}$ and $h=D$ so line~\textbf{1.A} runs in $O(\sqrt{n}+D)$ rounds.
 The set $S$ computed in line~\textbf{2} has size $\tilde{O}(\sqrt{n})$ w.h.p. in $n$  and the computation in line~\textbf{2.A} is an $|S|$-source BFS, which can be performed in $\tilde{O}(\sqrt{n}+D)$ rounds. The global minimum computation in line~\textbf{3} takes $O(D)$ rounds, giving a total round complexity of $\tilde{O}(\sqrt{n}+D)$.

We can improve this algorithm to a $(2-\frac{1}{g})$-approximation using a technique similar to~\cite{peleg2013girth}. For odd length MWC of length $g$, our algorithm already computes a cycle of length at most $(2g-1)$ in line~\textbf{2.B}. For even length MWC, we alter the source detection algorithm so that the cycle is identified in line~\textbf{1.B} even if exactly one vertex of the cycle is outside the $\sqrt{n}$-neighborhood --- this is readily done with one extra round since both parents of the vertex in the cycle are in the $\sqrt{n}$-neighborhood. With this change, an even cycle of length $g$ that is not detected in line~\textbf{1.B} gives a cycle of length at most $(2g-2)$ in line~\textbf{2.B}. So, we get an approximation ratio of $\frac{2g-1}{g} = \left(2 - \frac{1}{g}\right)$.

\subsubsection{Approximate Undirected Weighted MWC}
\label{sec:approxundirwtmwcub}

\begin{algorithm}[t]
    \caption{$(2+\epsilon')$-Approximation Algorithm for Undirected Weighted MWC}
    \begin{algorithmic}
        \State Let $h = \left(1+2/\epsilon\right) n^{3/4}$, and let $w^{i}(x,y) = \left\lceil \frac{2h w(x,y)}{\epsilon 2^i} \right\rceil$
        \State 0: Construct set $S$ by sampling each vertex $v \in G$ with probability $\Theta(\log n/ n^{3/4})$.
        \State 1: \textbf{for} $i=1$ \textbf{ to } $\log_{(1+\epsilon)} hW$
        \State \;\; {1.A}: Replace every edge $(x,y)$ with a path of length $w^{i}(x,y)$ to get graph $G^i$.
        \State \;\; {1.B}: Compute a $h$-hop limited $2$-approximation of the undirected unweighted MWC of $G^i$.
        \State \;\; {1.C}: If the cycle computed has hop length $H^i$, record a candidate cycle of weight $\frac{\epsilon2^i}{2h}H^i$ in $G$.
        \State 2: \textbf{for} {vertex $w \in S$} 
        \State \;\; {2.A}: Perform SSSP starting from $w$.
        \State \;\; {2.B}: For each non-tree edge in the shortest path tree computed, record a candidate cycle.
        \State {3}: Return the minimum weight cycle among all recorded candidate cycles.
    \end{algorithmic}
    \label{alg:approxundirwtmwcub}
\end{algorithm}

We give a $(2+\epsilon')$-approximation algorithm to compute undirected weighted minimum weight cycle in $\tilde{O}\left(\min(n^{3/4}D^{1/4} + n^{1/4}D, n^{3/4} + n^{0.65}D^{2/5} + n^{1/4}D, n)\right)$ rounds, proving Theorem~\ref{thm:undirmwc}.\ref{thm:undirmwc:approx}. This gives a sublinear round algorithm when $D$ is $\tilde{o}(n^{3/4})$ and compares favorably to the $\Omega(n)$ lower bound for $(2-\epsilon)$-approximation of undirected weighted MWC. For large $D$ ($D$ is $\tilde{\omega}(n^{3/4})$), the exact algorithm with $\tilde{O}(n)$ round complexity is more efficient.

In our recent paper on MWC~\cite{mwcarxiv}, we present an improved approximation algorithm that takes $\tilde{O}(n^{2/3}+D)$ rounds that is faster than the algorithm presented here, and is always sublinear when $D$ is sublinear. We leave this initial result here for completeness.

Our algorithm builds on the undirected unweighted algorithm in the previous section and uses scaling to approximate bounded hop-length cycles. This works well for cycles with short hop length, and we use sampling to handle cycles with long hop length. The scaling technique used here was used in the context of computing approximate shortest paths in~\cite{nanongkai2014approx}.

First, we slightly modify the undirected unweighted MWC approximation algorithm above to compute cycles of hop length up to $h$, for some parameter $h$. This is done by changing the computations in lines~\textbf{1.A} and~\textbf{2.A} of Algorithm~\ref{alg:approxundirmwcub} to terminate at hop length $h$. Then, the source detection takes $O(\sqrt{n}+h)$ and the BFS takes $\tilde{O}(\sqrt{n}+h)$ rounds, giving a total running time of $\tilde{O}(\sqrt{n}+h)$. This modified algorithm is used in line \textbf{1.B} of Algorithm~\ref{alg:approxundirwtmwcub}.

Lines~\textbf{1}-\textbf{1.C} compute a $(2+2\epsilon)$-approximation of the minimum weight cycle if it has hop length at most $n^{3/4}$ in $G$ --- we set $\epsilon' = 2\epsilon$ to get a $(2+\epsilon')$-approximation. To prove this, we use Lemma~3.4 of~\cite{nanongkai2014approx}, which says that in the scaled graph $G^i$, any distance $\delta^i_{xy}$ is such that $\frac{\epsilon2^i}{2h} \delta^i_{xy}$ is at most the distance $\delta_{xy}$ in the original graph $G$, and that there is an $i^*$ such that $\delta^{i^*}_{xy}$ is a $(1+\epsilon)$-approximation of $\delta_{xy}$. Further, for this $i^*$, $\delta^{i^*}_{xy} \le h \left(1+2/\epsilon\right) n^{3/4} $ if the $x$-$y$ path has hop length at most $n^{3/4}$. These arguments about distances also apply to cycles, so if $G$ has a cycle $C$ of hop length at most $n^{3/4}$ then we detect a $2(1+\epsilon)$-approximation of it in line~\textbf{1.C}. If the MWC has hop length more than $n^{3/4}$, lines~\textbf{2}-\textbf{2.B} exactly compute the MWC. The sampled set $S$ has size $\tilde{O}(n^{1/4})$ and w.h.p. hits every cycle with at least $n^{3/4}$ vertices. So, the MWC contains a sampled vertex $w$, and the SSSP from $w$ in line~\textbf{2.A} detects the cycle.

Lines~\textbf{1.A}-\textbf{1.C} take $\tilde{O}(\sqrt{n}+h)$ rounds, and performing the computation for all $i$ takes $\tilde{O}(n^{3/4} \cdot \log W)$. We perform an SSSP from $\tilde{O}(n^{1/4})$ vertices in lines~\textbf{2.A} and we use the $\tilde{O}(\sqrt{n}D^{1/4}+D)$-round SSSP algorithm of~\cite{chechiksssp}, for a total of $\tilde{O}(n^{3/4}D^{1/4} + n^{1/4}D)$ rounds. The total round complexity is $\tilde{O}(n^{3/4}D^{1/4} + n^{1/4}D)$. 

We can also use a $(1+\epsilon)$-approximation algorithm for SSSP in line~\textbf{2.A} to compute a $(2+\epsilon')$-approximation of the MWC. Each approximate SSSP computation takes $\tilde{O}\left( \sqrt{n} + n^{2/5}D^{2/5} + D\right)$ rounds using the algorithm of~\cite{cao2021approximatesssp}, which is more efficient than exact SSSP computation when $D$ is $\tilde{o}(n^{2/3})$. This gives an improved algorithm when $D$ is $\tilde{o}(n^{2/3})$ with round complexity $\tilde{O}\left(n^{3/4} + n^{0.65}D^{2/5} + n^{1/4}D\right)$.

\subsection{Fixed-Length Cycle detection}
\label{sec:cycledet}
The problem of detecting small cycles of fixed length in undirected unweighted graphs has been studied extensively in the CONGEST model. Triangles (3-cycles) can be detected in $\tilde{O}(n^{1/3})$ rounds~\cite{chang2021near,izumi2017triangle}, and there is a matching lower bound~\cite{izumi2017triangle} for the enumeration of all triangles in a graph. For detection of 4-cycles $C_4$ in undirected graphs, a nearly tight bound of $\tilde{\Theta}(n^{1/2})$ is given in~\cite{drucker2014}, and a nearly tight bound of $\tilde{\Theta}(n)$ for the detection of $C_q$ in undirected graphs for any odd $q \ge 5$. A lower bound of $\tilde{\Omega}(n^{1/2})$ was given in~\cite{korhonen2017} for the detection of $C_{2q}$ for any $q \ge 3$. We note here that the $\tilde{\Omega}(n)$ lower bound for odd cycles given by~\cite{drucker2014} cannot be directly extended to a lower bound of undirected unweighted MWC since the lower bound graph in their construction necessarily contains short $C_4$ cycles. 
An $\tilde{\Omega}(n)$ lower bound for the problem of enumerating all $C_4$-copies in a given graph
is given in~\cite{eden2021sublinear}. However, their construction does not extend to the problem of detecting if a copy of $C_4$ exists in a given graph. 

Turning to directed graphs, the algorithm in~\cite{chang2021near} for 3-cycle detection continues to apply in the directed case~\cite{Pettie2022}, and this gives an $\tilde{O}(n^{1/3})$ round algorithm for detecting directed 3-cycles. Surprisingly, for cycles of length $q \ge 4$, we have an $\tilde{\Omega}(n)$ lower bound for detecting directed cycles of length $q$ as stated in Theorem~\ref{thm:dirunwother}.\ref{thm:dirunwother:cycledet}, which we now establish.

\begin{proof}[Proof of Theorem~\ref{thm:dirunwother}.\ref{thm:dirunwother:cycledet}]
We build on the reduction used in the directed MWC lower bound. Replace each $\ell_i$ vertex in Figure~\ref{fig:dirmwc} with a directed path of $q-3$ vertices. Add incoming edges to $\ell_i$ to the first vertex of this path and outgoing edges from $\ell_i$ to the last vertex of this path. Clearly, a 4-cycle involving $\ell_i$ in the original graph is a $q$-cycle in this modified graph. Using the same argument as in Lemma~\ref{lem:dirmwclb}, we show that Set Disjointness reduces to checking whether this graph has a $q$-cycle or if the smallest cycle in the graph has length $2q$. This graph has $n=(q-3)k+3k = \Theta(k)$ vertices and we get $R(n) \cdot 2k \cdot \log n \ge \Omega(k^2)$ which shows that $R(n)$ is $\Omega(\frac{n}{\log n})$.
\end{proof}

\section{Path and Cycle Construction}
\label{sec:recon}

We have presented replacement path algorithms that compute the weights of the replacement paths for each edge in the input shortest path $P_{st}$. In Section~\ref{sec:rpathsrecon}, we extend the algorithms to construct the actual replacement path in a distributed manner when an edge on $P_{st}$ fails. Similarly for minimum weight cycle, in Section~\ref{sec:mwcrecon} we extend algorithms that compute weights to construct a minimum weight cycle.

\subsection{Replacement Path Construction}
\label{sec:rpathsrecon}

In this section, we extend the algorithms that compute replacement path weights (from Section~\ref{sec:rp}) to construct the actual replacement path in a distributed manner when an edge on $P_{st}$ fails. For this purpose, we construct a routing table at each node. This routing table has size $h_{st}$ ($h_{st}$ is the hop length of $P_{st}$), and will allow us to efficiently re-establish communication on failure.

For undirected graphs, we additionally show an alternate on-the-fly model of path construction in Section~\ref{sec:undirpathconstr}, where we will construct replacement paths by storing only $O(1)$ information per node, without storing the whole routing table, at the cost of slightly increasing the rounds required to construct paths. On-the-fly construction of replacement path can be done with only a constant factor more rounds than with routing table (as shown later).

\vspace{0.1in}
\noindent
\textbf{Routing Table Description}: Each node $v$ has a routing table $R_v$ of size $h_{st}$. For each edge $e \in P_{st}$, there is a routing table entry $R_v(e) = u$, where $u$ is the next vertex on the replacement path for $e$ if $v$ is on the path.

\vspace{0.1in}
\noindent
\textbf{Path Construction from Routing Table}: Once routing tables are computed in a preprocessing step we must establish communication along a replacement path when an edge failure occurs. Say an edge $e$ on $P_{st}$ fails, and a failure message is broadcast from a node incident to $e$. This message reaches $s$ in at most $h_{st}$ rounds as any edge on $P_{st}$ is at most $h_{st}$ hops from $s$. $s$ is the first vertex on the replacement path for $e$, and checks the routing table entry $R_s(e) = v_1$ to get the next vertex. $s$ sends a message notifying $v_1$, and $v_1$ checks its routing table entry $R_{v_1}(e) = v_2$. This procedure continues until vertex $t$ is reached, and the replacement path $\langle s,v_1,v_2, \dots v_k=t \rangle$ has been established. This process takes $k = h_{rep}$ rounds, where $h_{rep}$ is the hop length of the replacement path. Thus, after an edge fails, we use a total of $h_{st}+h_{rep}$ rounds to re-establish communication.

\subsubsection{Directed Weighted Replacement Path Construction}

We will now construct routing tables with only $O(n)$ overhead, thus the total round complexity of the algorithm computing weights and constructing paths is unchanged at $\tilde{O}(n)$.

\vspace{0.1in}
\noindent
\textbf{Routing Table Construction for Directed Weighted RPaths}:  We modify the APSP algorithm such that each vertex $u$ also knows the vertex $First(u,v)$ which is the first vertex after $u$ on the $u$-$v$ shortest path in $G'$ for each $v$, and $v$ knows $Last(u,v)$ which is the last vertex before $v$ on the $u$-$v$ shortest path. Consider an edge $e = (v_j,v_{j+1})$ on $P_{st}$, we prove in Section~\ref{sec:dirrpub} that the $z_{j_o}$-$z_{j_i}$ shortest path has the same weight as the replacement path for $e$. We further observe that if $v_a$, $v_b$ are the first and last vertices respectively on $P_{st}$ that are also on this $z_{j_o}$-$z_{j_i}$ shortest path, then the replacement path for $e$ is obtained by concatenating the $s$-$v_a$ subpath of $P_{st}$, $v_a$-$v_b$ shortest path in $G'$ and $v_b$-$t$ subpath of $P_{st}$. 

Starting from $z_{j_o}$, we follow the next vertex ($First(z_{j_o},z_{j_i})$) on the $z_{j_o}$-$z_{j_i}$ shortest path until we reach a vertex $v_a$ that is also on $P_{st}$. Similarly, we start from $z_{j_i}$ and follow $Last(z_{j_o},z_{j_i})$ until we reach vertex $v_b$ on $P_{st}$. Once they are found, the identities of $v_a, v_b$ are broadcast to all vertices. We may need to follow up to $O(n)$ vertices to find $v_a, v_b$, and the computation for all edges in $P_{st}$ can be pipelined since each edge transmits only up to 2 messages at any round (one each for $First$ and $Last$ traversal). Thus, we take $O(n)$ rounds to find $v_a,v_b$ and $O(h_{st}+D) = O(n)$ rounds to broadcast $v_a,v_b$ for all edges in $P_{st}$.

After the broadcast, we now set routing table entries for $e\in P_{st}$. Each vertex $u$ on the $s$-$v_a$ subpath of $P_{st}$ sets $R_u(e) = u'$ where $u'$ is the next vertex on the subpath. Since all vertices know the identities of vertices on $P_{st}$, this can be done locally at $u$ once $v_a$ is known. Similarly vertices on $v_b$-$t$ subpath set $R_u(e)=u'$ appropriately. All other vertices set $R_u(e) = First(u, v_b)$, so that once vertex $v_a$ is reached by the routing algorithm, the $v_a$-$v_b$ shortest path in $G'$ is traversed.

\begin{theorem}
    \label{thm:dirrprecon}
    The directed weighted RPaths algorithm can be modified to compute routing tables for replacement path construction in $\tilde{O}(n)$ rounds. Storing the routing table uses $O(h_{st})$ space at each node.
    After edge failure, path construction takes $h_{st}+h_{rep}$ rounds.
\end{theorem}

\subsubsection{Directed Unweighted Replacement Path Construction}
\label{sec:dirunwrppath}

We now show how to construct routing tables for directed unweighted RPaths. The round complexity of constructing routing tables in addition to computing weights is unchanged, as we show the overhead for routing table construction is low ($O(h_{st}+D+h)$ in terms of parameter $h$ of Algorithm~\ref{alg:dirunwrp}).

\vspace{0.1in}
\noindent
\textbf{Routing Table Construction in Directed Unweighted Graphs}:
In Case~1 of Algorithm~\ref{alg:dirunwrp}, this is done simply by tracking the next vertex $First(u,v)$ in a shortest path computed by the SSSP algorithm in the graph with each edge $e \in P_{st}$ removed, and setting the routing table entry for all $v\in V$ as $R_v(e) = First(v,t)$ to route along the $s$-$t$ shortest path. 

For Case 2, we now show how to construct routing tables without much overhead. We first modify the BFS in line~\ref{alg:dirunwrp:short} such that $u$ knows the parent vertex in the computed $u$-$v$ shortest path in $G-P_{st}$, denoted $First(u,v)$. Consider an edge $e \in P_{st}$, the replacement path for $e$ consists of subpaths of $P_{st}$ from $s$ to $a$ and $b$ to $t$ along with a detour from $a$ to $b$ for some $a,b \in P_{st}$. During the minimum detour computations in line~\ref{alg:dirunwrp:localrp} of Algorithm~\ref{alg:localdetour} and line~\ref{alg:dirunwrp:broadcast} of Algorithm~\ref{alg:dirunwrp}, we keep track of the vertices $b$ and $a$ respectively for which the replacement path distance is minimum. Now, the detour vertices $a$ and $b$ are known for each of the $h_{st}$ replacement paths, and this information is broadcast to all vertices (which takes $O(h_{st}+D)$ rounds). 

Each vertex $v$ on the $s$-$a$ subpath or $b$-$t$ subpath of $P_{st}$ sets $R_v(e) = v'$ where $v'$ is the next vertex on the subpath. This can be done locally at $v$ as the identities of vertices on $P_{st}$ are known to all vertices. Now, we set routing tables for all other vertices to route along the detour from $a$ to $b$. If $D(a,b)$ is a short detour, set $R_v(e) = First(v,b)$ for each $v$ on the detour starting from $a$  in $O(h)$ rounds. If $D(a,b)$ is a long detour, after line~\ref{alg:dirunwrp:detourfinal} of Algorithm~\ref{alg:localdetour}, $a$ knows that the detour is a concatenation of shortest paths $a$-$v$, $v$-$u$, $u$-$b$ in $G-P_{st}$, where $u,v$ are sampled vertices and $a$-$v$,$u$-$b$ are $h$-hop paths. The identities of $u$,$v$ are broadcast. We set routing table entries at vertex $x$ along the $a$-$u$ and $v$-$b$ paths by traversing the paths in $O(h)$ rounds and setting $R_x(e)=First(x,u)$ and $R_x(e)=First(x,b)$ respectively. Since each $v$-$u$ shortest path can be locally computed at each vertex $x$ (line~\ref{alg:dirunwrp:skeletondist} of Algorithm~\ref{alg:localdetour}), each vertex on the $v$-$u$ path can locally set $R_{x}(e)=First(x,v')$ where $v'$ is the next sampled vertex after $x$ on the $v$-$u$ shortest path. So, the overhead for routing table construction is $O(h)$ for traversing short hop paths, and $O(h_{st}+D)$ for the broadcasts. Since this overhead is less than the round complexity of Algorithm~\ref{alg:dirunwrp} (which is $O(p^2+p\cdot h_{st} +h+D)$ in terms of parameters $p,h$), the total round complexity of constructing routing tables in addition to computing weights is unchanged.

\begin{theorem}
    \label{thm:dirunwrprecon}
    The directed unweighted RPaths algorithm can be modified to also compute routing tables for replacement path construction with the same round complexity of $\tilde{O}(n^{2/3}+\sqrt{nh_{st}}+h_{st}, \min(h_{st} \cdot SSSP))$ rounds. Storing the routing table uses $O(h_{st})$ space at each node. After edge failure, path construction takes $h_{st}+h_{rep}$ rounds.
\end{theorem}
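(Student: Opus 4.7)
The plan is to show that each of the two algorithms for directed unweighted RPaths (Lemma~\ref{lem:dirunwrpub1} and Lemma~\ref{lem:dirumwrp}) can be augmented, with only constant-factor or lower-order overhead, so that for every edge $e_j \in P_{st}$ every vertex $v$ knows whether it lies on the shortest replacement path $P_{e_j}$ and, if so, which of its out-neighbors is its successor on $P_{e_j}$. All next-hop information will be extracted from the BFS/SSSP trees the weight-computing algorithms already build.

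For the $O(h_{st}\cdot SSSP)$ algorithm of Lemma~\ref{lem:dirunwrpub1}, the idea is to run each of the $h_{st}$ weighted SSSP computations (one per removed edge of $P_{st}$) on the reverse graph from source $t$, so that every vertex $v$ learns $\delta_{vt}$ in the current modified graph. In one additional round per SSSP call, every vertex exchanges its $\delta_{vt}$-value with each out-neighbor $u$ in $G$ and selects as its successor any $u$ with $\delta_{ut}+w(v,u)=\delta_{vt}$. This yields, for each $j$, the routing entry at $v$ for the $j$-th replacement path, adds $O(1)$ rounds per SSSP invocation, and preserves the $O(h_{st}\cdot SSSP)$ bound.

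For Algorithm~\ref{alg:dirunwrp}, the augmentation piggy-backs on the bounded-hop BFS calls already issued from every source in $P_{st}\cup S$. During each BFS from source $u$, every discovered vertex $v$ additionally stores its parent $\pi_u(v)$ in the tree; equivalently, $v$ learns, for every source $u$ from which it was reached within $h$ hops, its next hop toward $u$ along a $d'$-shortest path (and, by running BFS on the reversed graph with the same bound, its next hop \emph{away} from $u$ along the detour). This only multiplies message sizes by $O(\log n)$ and so does not change the asymptotic round count. Once each $v_c\in P_{st}$ has computed its optimal replacement path, it broadcasts to the whole network the chosen tuple $(v_a,v_b,v,u)$ describing the decomposition of the optimal detour as $v_a\!\to\!v\!\to\!u\!\to\!v_b$ (or a short-detour marker together with $(v_a,v_b)$); this broadcast of $O(h_{st})$ short messages costs $O(h_{st}+D)$ rounds, which is absorbed into $\tilde{O}(n^{2/3}+\sqrt{nh_{st}}+D)$. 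Every vertex then locally determines, for each $j$, whether it lies on the $P_{st}$-prefix up to $v_a$, the $P_{st}$-suffix from $v_b$, or one of the three detour legs, and reads its successor off the corresponding stored next-hop entry.

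The main obstacle is bookkeeping the successor information cleanly for a detour that may pass through two sampled vertices: a vertex on the middle leg $v\!\to\!u$ must use the next-hop table of the BFS rooted at $v$, while vertices on the outer legs use the tables rooted at $v_a$ (reversed BFS, so that the successor toward $v$ is known) and at $u$. Since every $v_a\in P_{st}$ and every $w\in S$ was already a BFS source in both directions of $G-P_{st}$, the relevant entry is locally available at every vertex, and the broadcast tells the vertex which entry to consult. Combining the two augmented algorithms and taking the minimum of their bounds gives the claimed round complexity $\tilde{O}\bigl(\min(n^{2/3}+\sqrt{nh_{st}}+D,\; h_{st}\cdot SSSP)\bigr)$.
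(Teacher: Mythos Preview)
Your proposal is correct and follows essentially the same approach as the paper: for the $O(h_{st}\cdot SSSP)$ algorithm you reverse the graph and run SSSP from $t$ so that a one-round neighbor comparison yields successors, and for Algorithm~\ref{alg:dirunwrp} you harvest parent pointers from the bounded-hop BFS calls, broadcast the detour endpoints for each of the $h_{st}$ edges in $O(h_{st}+D)$ rounds, and have each vertex look up its next hop from the appropriate BFS tree.

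If anything, you are more explicit than the paper on two points. First, you broadcast the full tuple $(v_a,v_b,v,u)$ including the intermediate sampled vertices, whereas the paper only says to broadcast ``the correct $v_a,v_b$'' and then refers to ``the appropriate sampled vertex'' without saying how a vertex learns it; your version makes this unambiguous. Second, you note that forward BFS from a source gives a reached vertex its \emph{predecessor} toward the source, not its successor along the detour, and you fix this by also running the $h$-hop BFS on the reversed graph from each source in $P_{st}\cup S$; the paper's proof is silent on this directionality issue. One small wording slip: you write that every $v_a\in P_{st}$ and $w\in S$ ``was already a BFS source in both directions,'' but in the original weight-only algorithm BFS is run only in the forward direction---the reverse BFS is part of your augmentation, not already present. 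This does not affect correctness or the bound, since the extra reverse BFS calls at most double the work.
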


\subsubsection{Undirected Replacement Path Construction}
\label{sec:undirpathconstr}

As noted in~\cite{bremler2001restoration}, undirected replacement paths can be restored using an arbitrary tie-breaking scheme to get a unique shortest path tree when using the characterization of replacement paths as two shortest paths along with an edge. There has been recent progress with a similar result~\cite{bodwin2023restorable} for unweighted undirected graphs where replacement paths are characterized by the concatenation of just two shortest paths, but we do not use this tie-breaking scheme for our application.

We will construct undirected replacement paths using routing tables. Additionally, we propose an on-the-fly path construction model for undirected replacement paths.

For undirected weighted graphs, routing table construction takes $\tilde{O}(h_{st}+h_{rep})$ rounds, which may exceed the $O(h_{st} + SSSP)$ round complexity of the algorithm that just computes weights. This is in contrast to the directed case, and the undirected unweighted case, where routing table construction rounds is within the rounds used to compute weights.

\vspace{0.1in}
\noindent
\textbf{Routing Table Construction for Undirected RPaths}: Consider an edge $e \in P_{st}$, we track the deviating edge $(u,v)$ for each replacement path computed by our algorithm, when finding the candidate replacement path for $e$ with minimum weight through the convergecast procedure. We broadcast this deviating edge information for each of the $h_{st}$ edges in $O(h_{st}+D)$ rounds. Now, we show how to set routing table entries to construct the replacement path $P_s(s,u) \circ (u,v) \circ P_t(v,t)$ for edge $e$. 

During the SSSP computation from $t$, for each vertex $x$ we track the next vertex on the $x$-$t$ path, denoted $First(x,t)$. For each $x$ we initially set $R_x(e)=First(x,t)$ to route along $P_t(v,t)$. On the other hand, the next vertex information for $P_s(s,u)$ cannot be trivially determined during the SSSP from $s$, since only $u$ knows its parent on the $u$-$s$ path and $s$ does not know the next vertex on the $s$-$u$ path without additional computation. 

We start from $u$, and $u$ informs its parent $u'$ that it is the next vertex on the $P_s(s,u)$ path, and we set $R_{u'}(e) = u$. In the next round, $u'$ informs its parent on the $P_s(s,u')$ path, and so on until $s$ gets the message and sets its routing table entry. Repeating this operation up the path to $s$ takes $O(h_{rep})$ rounds and correctly sets routing table entries for nodes on path $P_s(s,u)$. Finally, we set $R_u(e)=v$ to route along edge $(u,v)$. Computing routing tables for a single edge takes $h_{rep}$ rounds, and we can use random scheduling~\cite{ghaffarischeduling} to perform the computation for all $h_{st}$ edges since each computation involves only one message per edge. The total congestion along any edge is $O(h_{st})$, and the computation takes $\tilde{O}(h_{st}+h_{rep})$ rounds. In contrast to the earlier construction algorithms for directed graphs, this may be more than the $O(h_{st} + SSSP)$ round complexity to compute just the weights of replacement paths if $h_{rep}$ is large.

For the unweighted case, we use the same algorithm but $h_{rep} \le 2D+1$ since both $s$-$u$ and $u$-$t$ shortest paths have at most $D$ hops for any vertex $u$. So, we need $O(D)$ rounds to compute routing tables, which is the same round complexity needed to compute replacement path weights.

\vspace{0.1in}
\noindent
\textbf{On-the-fly Model}: In this model, once the edge that has failed is known, this information is broadcast and the nodes should determine a replacement path for this failed edge without storing the entire routing table. For undirected graphs, we are able to use only $O(1)$ words of storage per node while establishing communication along a replacement path in $h_{st} + 3h_{rep}$ rounds (which is only a constant factor higher than routing tables). The advantage here is that it may not be necessary to store $h_{st}$ words of information for the entire routing table at every node, and this model is preferable if space is a concern.

\vspace{0.1in}
\noindent
\textbf{On-the-fly Construction for Undirected RPaths}:
In the on-the-fly model, we store information about the deviating edge $(u,v)$ at $u$, and each node $x$ stores $First(x,t)$. Once the failing edge $e$ is known, node $s$ is notified of the failure by the vertex incident to $e$ in at most $h_{st}$ rounds and $s$ sends this information down its shortest path tree until it reaches $u$ in $h_{rep}$ rounds. The deviating edge $(u,v)$ corresponding to $e$ is now determined, and the routing information along $P_s(s,u)$ is computed starting from the parent of $u$ through the shortest path to $s$ as described in the routing table construction procedure in $h_{rep}$ rounds. Node $u$ sets its next vertex to $v$ and all other nodes $x$ set their next vertex to be $First(x,t)$ to route along $P_t(v,t)$. Now every vertex on the replacement path knows its next vertex on the path and messages can be sent in $h_{rep}$ rounds, establishing communication along the replacement path in $h_{st}+3h_{rep}$ rounds. Construction in the routing table model takes $h_{st} + h_{rep}$ rounds, so the on-the-fly construction uses only a constant factor more rounds than the routing table model with $O(1)$ space per node. 

For directed replacement paths, it seems difficult to have a similar on-the-fly model that uses $o(h_{st})$ space per node. In the directed case, a single vertex $v_a$ on the $P_{st}$ path may be the first deviating vertex of $\Omega(h_{st})$ replacement paths, and each of the corresponding detours may be very different. It is not clear how to efficiently store information about all such detours (unlike undirected graphs where we only store two shortest path trees) in $o(h_{st})$ space without significantly increasing the rounds needed for on-the-fly construction. We can have an on-the-fly approach storing $O(h_{st})$ values to include information about all detours, but in that case we might as well store the entire routing table.

\begin{theorem}
    \label{thm:undirrprecon}
    The undirected weighted RPaths algorithm can be modified to also compute routing tables for replacement path construction with $O(h_{st}+h_{rep})$ additional rounds. For undirected unweighted graphs, this is $O(D)$ additional rounds. 
    \begin{enumerate}
        \item The on-the-fly method uses $O(1)$ space per node. After edge failure, path construction takes $h_{st}+3h_{rep}$ rounds.
        \item The routing table method uses $O(h_{st})$ space per node. After edge failure, path construction takes $h_{st}+h_{rep}$ rounds.
    \end{enumerate}
\end{theorem}

\subsection{Minimum Weight Cycle Construction}
\label{sec:mwcrecon}

Our minimum weight cycle algorithms (Section~\ref{sec:mwcub}) can be modified to construct cycles using techniques similar to those used for replacement path construction. At each vertex on the MWC, we compute the next vertex on the cycle with $O(n)$ overhead, thus the total round complexity of the algorithm computing exact weights and constructing cycles is unchanged. Our main tool is to modify the APSP subroutine used in the MWC algorithms to compute the next vertex to $u$, $First(u,v)$, on any $u$-$v$ path, and use this to construct shortest paths that are part of a minimum weight cycle.

To reconstruct a minimum weight cycle through each node (ANSC), every vertex has a routing table of size up to $n$, with the entry corresponding to vertex $u$ representing the next vertex on a minimum cycle through $u$. 

We can also consider an on-the-fly model where we are given a vertex $u$ and need to construct a minimum weight cycle through $u$, with the advantage that we reduce the space used per node. Our on-the-fly algorithms require storing the $\Theta(n)$-sized routing table for APSP ($u$ stores $First(u,v)$ for each $v$);  this is a reasonable assumption since APSP routing tables are important information that can be used in other contexts as well. In addition to the routing table, we only need to store $O(1)$ additional information per node to construct MWC or ANSC on-the-fly.

\subsubsection{Directed Cycle Construction}
To construct a minimum weight cycle, each node in the cycle must determine the next vertex in the cycle. In the directed MWC algorithm (Section~\ref{sec:mwcub}), we compute a minimum weight cycle formed by a $u$-$v$ shortest path along with a $(v,u)$ edge. We can broadcast these vertices $u,v$ so that all vertices except $v$ set their next vertex to be the next vertex on a shortest path to $v$ and vertex $v$ sets its next vertex to $u$, and we construct a MWC in $O(D)$ additional rounds keeping the total $\tilde{O}(n)$ round complexity unchanged. We similarly construct cycles from the ANSC algorithm, and populate routing tables. We would need to broadcast $n$ pairs of $u,v$ vertices to construct cycles through all $n$ nodes, which is done in $O(n)$ rounds, keeping the total round complexity unchanged.

For the on-the-fly model, as mentioned earlier we assume that the routing table for APSP is stored at each node. In addition, the vertex $u$ stores only the vertex $v$ which is the last vertex in a minimum weight cycle through $u$. To construct the cycle through $u$, $u$ broadcasts the value of $v$ to all vertices, and other vertices use the APSP routing table to find the next vertex on a shortest path to $v$. This allows us to use $O(1)$ additional space per vertex, and constructs the cycle in $h_{cyc}$ rounds (same as routing table), where $h_{cyc}$ is the number of hops in the cycle.

\subsubsection{Undirected Cycle Construction}
Undirected cycle construction is quite similar to the directed case, with the change that a minimum weight cycle is composed of two shortest paths $P_{vu}, P_{v'u}$ and edge $(v,v')$ for some vertices $u,v,v'$. These vertices are computed in the undirected MWC algorithm, and can be broadcast at the end of the algorithm. The APSP information is then used to construct shortest paths $P_{vu}$ and $P_{v'u}$. The same argument applies to ANSC construction, and we broadcast $n$ tuples $(u,v,v')$ encoding information about cycles through all $n$ nodes in $O(n)$ rounds. The total round complexity of $\tilde{O}(n)$ is unchanged. In the on-the-fly model, the vertex $u$ stores $v,v'$ along with the APSP routing tables to construct a minimum weight cycle through $u$. So, we use $O(1)$ space per vertex in addition to the APSP tables, and constructing an MWC takes $h_{cyc}$ rounds (same as routing table).

\section{Conclusion and Further Research} 
\label{conclusion}
We have presented several upper and lower bounds on the round complexity of RPaths, 2-SiSP, MWC and ANSC in the CONGEST model. Many of our bounds are close to optimal, but we have some avenues for further research:
\begin{itemize}
\item Can we narrow or close the gap between the $\tilde{O}(n^{2/3}+\sqrt{nh_{st}}+D)$ upper bound and $\tilde{\Omega}(\sqrt{n}+D)$ lower bound for directed unweighted RPaths and approximate weighted directed RPaths?
\item Can the dependence on $h_{st}$ for RPaths be reduced? 
\item Round complexity of computing exact girth remains an open problem.
\item There are gaps between upper and lower bounds for MWC approximation in directed or weighted graphs; we make progress in a recent paper~\cite{mwcarxiv, mwc2024}, but can the gaps be reduced further?
\end{itemize}

\bibliographystyle{plainurl}
\bibliography{main}

\begin{thebibliography}{10}

\bibitem{abboud2016}
Amir Abboud, Keren Censor-Hillel, and Seri Khoury.
\newblock Near-linear lower bounds for distributed distance computations, even
  in sparse networks.
\newblock In {\em International Symposium on Distributed Computing}, pages
  29--42. Springer, 2016.

\bibitem{abboud2015subcubic}
Amir Abboud, Fabrizio Grandoni, and Virginia~Vassilevska Williams.
\newblock Subcubic equivalences between graph centrality problems, apsp and
  diameter.
\newblock In {\em Proceedings of the Twenty-Sixth Annual ACM-SIAM Symposium on
  Discrete Algorithms}, SODA '15, page 1681–1697, USA, 2015. Society for
  Industrial and Applied Mathematics.

\bibitem{agarwal2018finegrained}
Udit Agarwal and Vijaya Ramachandran.
\newblock Fine-grained complexity for sparse graphs.
\newblock In {\em Proceedings of the 50th Annual {ACM} {SIGACT} Symposium on
  Theory of Computing, {STOC} 2018}, pages 239--252, Los Angeles, CA, USA,
  2018. {ACM}.

\bibitem{agarwal2020deterministic}
Udit Agarwal and Vijaya Ramachandran.
\newblock Faster deterministic all pairs shortest paths in {CONGEST} model.
\newblock In {\em {SPAA} 2020}, pages 11--21. {ACM}, 2020.

\bibitem{ancona2020}
Bertie Ancona, Keren Censor{-}Hillel, Mina Dalirrooyfard, Yuval Efron, and
  Virginia~Vassilevska Williams.
\newblock Distributed distance approximation.
\newblock In {\em 24th International Conference on Principles of Distributed
  Systems, {OPODIS} 2020}, volume 184 of {\em LIPIcs}, pages 30:1--30:17,
  Strasbourg, France (Virtual Conference), 2020. Schloss Dagstuhl -
  Leibniz-Zentrum f{\"{u}}r Informatik.

\bibitem{baryossef2002}
Z.~Bar-Yossef, T.S. Jayram, R.~Kumar, and D.~Sivakumar.
\newblock An information statistics approach to data stream and communication
  complexity.
\newblock In {\em The 43rd Annual IEEE Symposium on Foundations of Computer
  Science, 2002. Proceedings.}, pages 209--218, 2002.

\bibitem{bernsteinapsp}
Aaron Bernstein and Danupon Nanongkai.
\newblock Distributed exact weighted all-pairs shortest paths in near-linear
  time.
\newblock In {\em Proceedings of the 51st Annual {ACM} {SIGACT} Symposium on
  Theory of Computing, {STOC} 2019}, pages 334--342, Phoenix, AZ, USA, 2019.
  {ACM}.

\bibitem{bodwin2023restorable}
Greg Bodwin and Merav Parter.
\newblock Restorable shortest path tiebreaking for edge-faulty graphs.
\newblock {\em Journal of the ACM}, 70(5):1--24, 2023.

\bibitem{bremler2001restoration}
Anat Bremler-Barr, Yehuda Afek, Haim Kaplan, Edith Cohen, and Michael Merritt.
\newblock Restoration by path concatenation: Fast recovery of mpls paths.
\newblock In {\em Proceedings of the twentieth annual ACM symposium on
  Principles of Distributed Computing}, pages 43--52, 2001.

\bibitem{cao2023sssp}
Nairen Cao and Jeremy~T. Fineman.
\newblock Parallel exact shortest paths in almost linear work and square root
  depth.
\newblock In {\em Proceedings of the 2023 {ACM-SIAM} Symposium on Discrete
  Algorithms, {SODA} 2023}, pages 4354--4372, Florence, Italy, 2023. {SIAM}.

\bibitem{cao2021approximatesssp}
Nairen Cao, Jeremy~T. Fineman, and Katina Russell.
\newblock Brief announcement: An improved distributed approximate single source
  shortest paths algorithm.
\newblock In {\em Proceedings of the 2021 ACM Symposium on Principles of
  Distributed Computing}, PODC'21, page 493–496, New York, NY, USA, 2021.
  Association for Computing Machinery.

\bibitem{censorhillel2020girth}
Keren Censor{-}Hillel, Orr Fischer, Tzlil Gonen, Fran{\c{c}}ois~Le Gall, Dean
  Leitersdorf, and Rotem Oshman.
\newblock Fast distributed algorithms for girth, cycles and small subgraphs.
\newblock In {\em 34th International Symposium on Distributed Computing, {DISC}
  2020}, volume 179 of {\em LIPIcs}, pages 33:1--33:17, Virtual Conference,
  2020. Schloss Dagstuhl - Leibniz-Zentrum f{\"{u}}r Informatik.

\bibitem{chang2021near}
Yi-Jun Chang, Seth Pettie, Thatchaphol Saranurak, and Hengjie Zhang.
\newblock Near-optimal distributed triangle enumeration via expander
  decompositions.
\newblock {\em Journal of the ACM (JACM)}, 68(3):1--36, 2021.

\bibitem{chechik2019ssrp}
Shiri Chechik and Sarel Cohen.
\newblock Near optimal algorithms for the single source replacement paths
  problem.
\newblock In {\em Proceedings of the Thirtieth Annual ACM-SIAM Symposium on
  Discrete Algorithms}, pages 2090--2109. SIAM, 2019.

\bibitem{chechik2020ssrp}
Shiri Chechik and Ofer Magen.
\newblock Near optimal algorithm for the directed single source replacement
  paths problem.
\newblock In {\em 47th International Colloquium on Automata, Languages, and
  Programming, {ICALP} 2020, July 8-11, 2020, Saarbr{\"{u}}cken, Germany
  (Virtual Conference)}, volume 168, pages 81:1--81:17, 2020.

\bibitem{chechiksssp}
Shiri Chechik and Doron Mukhtar.
\newblock Single-source shortest paths in the {CONGEST} model with improved
  bounds.
\newblock {\em Distributed Comput.}, 35(4):357--374, 2022.

\bibitem{dinitz2020spanner}
Michael Dinitz and Caleb Robelle.
\newblock Efficient and simple algorithms for fault-tolerant spanners.
\newblock In {\em Proceedings of the 39th Symposium on Principles of
  Distributed Computing}, pages 493--500, 2020.

\bibitem{drucker2014}
Andrew Drucker, Fabian Kuhn, and Rotem Oshman.
\newblock On the power of the congested clique model.
\newblock In {\em {ACM} Symposium on Principles of Distributed Computing,
  {PODC} '144}, pages 367--376, Paris, France, 2014. {ACM}.

\bibitem{eden2021sublinear}
Talya Eden, Nimrod Fiat, Orr Fischer, Fabian Kuhn, and Rotem Oshman.
\newblock Sublinear-time distributed algorithms for detecting small cliques and
  even cycles.
\newblock {\em Distributed Comput.}, 35(3):207--234, 2022.

\bibitem{elkin2006mst}
Michael Elkin.
\newblock An unconditional lower bound on the time-approximation trade-off for
  the distributed minimum spanning tree problem.
\newblock {\em SIAM Journal on Computing}, 36(2):433--456, 2006.

\bibitem{forster2018sssp}
Sebastian Forster and Danupon Nanongkai.
\newblock A faster distributed single-source shortest paths algorithm.
\newblock In {\em 59th {IEEE} Annual Symposium on Foundations of Computer
  Science, {FOCS} 2018}, pages 686--697, Paris, France, 2018. {IEEE} Computer
  Society.

\bibitem{fraigniaud2024evencycle}
Pierre Fraigniaud, Mael Luce, Frederic Magniez, and Ioan Todinca.
\newblock Even-cycle detection in the randomized and quantum {CONGEST} model.
\newblock Technical report, arXiv, 2024.
\newblock \href {https://arxiv.org/abs/2402.12018} {\path{arXiv:2402.12018}}.

\bibitem{frischknecht2012}
Silvio Frischknecht, Stephan Holzer, and Roger Wattenhofer.
\newblock Networks cannot compute their diameter in sublinear time.
\newblock In {\em Proceedings of the Twenty-Third Annual {ACM-SIAM} Symposium
  on Discrete Algorithms, {SODA} 2012}, pages 1150--1162, Kyoto, Japan, 2012.
  {SIAM}.

\bibitem{ghaffarischeduling}
Mohsen Ghaffari.
\newblock Near-optimal scheduling of distributed algorithms.
\newblock In {\em Proceedings of the 2015 {ACM} Symposium on Principles of
  Distributed Computing, {PODC} 2015}, pages 3--12, Donostia-San
  Sebasti{\'{a}}n, Spain,, 2015. {ACM}.

\bibitem{ghaffari2016fault}
Mohsen Ghaffari and Merav Parter.
\newblock Near-optimal distributed algorithms for fault-tolerant tree
  structures.
\newblock In {\em Proceedings of the 28th ACM Symposium on Parallelism in
  Algorithms and Architectures}, pages 387--396, 2016.

\bibitem{ghaffari2015reach}
Mohsen Ghaffari and Rajan Udwani.
\newblock Brief announcement: Distributed single-source reachability.
\newblock In {\em Proceedings of the 2015 {ACM} Symposium on Principles of
  Distributed Computing, {PODC} 2015}, pages 163--165, Donostia-San
  Sebasti{\'{a}}n, Spain, 2015. {ACM}.

\bibitem{hoang2019round}
Loc Hoang, Matteo Pontecorvi, Roshan Dathathri, Gurbinder Gill, Bozhi You,
  Keshav Pingali, and Vijaya Ramachandran.
\newblock A round-efficient distributed betweenness centrality algorithm.
\newblock In {\em Proceedings of the 24th {ACM} {SIGPLAN} Symposium on
  Principles and Practice of Parallel Programming, PPoPP 2019}, pages 272--286,
  Washington, DC, USA, 2019. {ACM}.

\bibitem{holzer2012apsp}
Stephan Holzer and Roger Wattenhofer.
\newblock Optimal distributed all pairs shortest paths and applications.
\newblock In {\em {ACM} Symposium on Principles of Distributed Computing,
  {PODC} '12}, pages 355--364, Madeira, Portugal, 2012. {ACM}.

\bibitem{izumi2017triangle}
Taisuke Izumi and Fran{\c{c}}ois~Le Gall.
\newblock Triangle finding and listing in {CONGEST} networks.
\newblock In {\em Proceedings of the {ACM} Symposium on Principles of
  Distributed Computing, {PODC} 2017}, pages 381--389, Washington, DC, USA,
  2017. {ACM}.

\bibitem{katoh1982efficient}
Naoki Katoh, Toshihide Ibaraki, and Hisashi Mine.
\newblock An efficient algorithm for k shortest simple paths.
\newblock {\em Networks}, 12(4):411--427, 1982.

\bibitem{korhonen2017}
Janne~H. Korhonen and Joel Rybicki.
\newblock Deterministic subgraph detection in broadcast {CONGEST}.
\newblock In {\em 21st International Conference on Principles of Distributed
  Systems, {OPODIS} 2017}, volume~95 of {\em LIPIcs}, pages 4:1--4:16, Lisbon,
  Portugal, 2017. Schloss Dagstuhl - Leibniz-Zentrum f{\"{u}}r Informatik.

\bibitem{kushilevitzcomm}
Eyal Kushilevitz and Noam Nisan.
\newblock {\em Communication Complexity}.
\newblock Cambridge University Press, Cambridge, 1996.
\newblock \href {https://doi.org/10.1017/CBO9780511574948}
  {\path{doi:10.1017/CBO9780511574948}}.

\bibitem{lawler1972procedure}
Eugene~L Lawler.
\newblock A procedure for computing the k best solutions to discrete
  optimization problems and its application to the shortest path problem.
\newblock {\em Management science}, 18(7):401--405, 1972.

\bibitem{lenzen2019distributed}
Christoph Lenzen, Boaz Patt-Shamir, and David Peleg.
\newblock Distributed distance computation and routing with small messages.
\newblock {\em Distributed Computing}, 32(2):133--157, 2019.

\bibitem{mwcarxiv}
Vignesh Manoharan and Vijaya Ramachandran.
\newblock Improved approximation bounds for minimum weight cycle in the
  {CONGEST} model.
\newblock {\em arXiv preprint}, 2023.
\newblock URL: \url{https://arxiv.org/abs/2308.08670}.

\bibitem{mwc2024}
Vignesh Manoharan and Vijaya Ramachandran.
\newblock Computing minimum weight cycle in the {CONGEST} model.
\newblock In {\em ACM Symposium on Principles of Distributed Computing (PODC
  2024)}, page (to appear), 2024.

\bibitem{rp2024}
Vignesh Manoharan and Vijaya Ramachandran.
\newblock Computing replacement paths in the {CONGEST} model.
\newblock In {\em Structural Information and Communication Complexity (SIROCCO
  2024)}, page (to appear), 2024.

\bibitem{nanongkai2014approx}
Danupon Nanongkai.
\newblock Distributed approximation algorithms for weighted shortest paths.
\newblock In {\em Symposium on Theory of Computing, {STOC} 2014}, pages
  565--573, New York, NY, USA, 2014. {ACM}.

\bibitem{parter2020fault}
Merav Parter.
\newblock Distributed constructions of dual-failure fault-tolerant distance
  preservers.
\newblock In {\em 34th International Symposium on Distributed Computing (DISC
  2020)}. Schloss Dagstuhl-Leibniz-Zentrum f{\"u}r Informatik, 2020.

\bibitem{parter2022vertex}
Merav Parter.
\newblock Nearly optimal vertex fault-tolerant spanners in optimal time:
  sequential, distributed, and parallel.
\newblock In {\em Proceedings of the 54th Annual ACM SIGACT Symposium on Theory
  of Computing}, pages 1080--1092, 2022.

\bibitem{peleg2000distributed}
David Peleg.
\newblock {\em Distributed computing: a locality-sensitive approach}.
\newblock SIAM, USA, 2000.

\bibitem{peleg2013girth}
David Peleg, Liam Roditty, and Elad Tal.
\newblock Distributed algorithms for network diameter and girth.
\newblock In {\em Automata, Languages, and Programming - 39th International
  Colloquium, {ICALP} 2012}, volume 7392 of {\em Lecture Notes in Computer
  Science}, pages 660--672, Warwick, UK, 2012. Springer.

\bibitem{peleg2000mst}
David Peleg and Vitaly Rubinovich.
\newblock A near-tight lower bound on the time complexity of distributed
  minimum-weight spanning tree construction.
\newblock {\em SIAM Journal on Computing}, 30(5):1427--1442, 2000.

\bibitem{Pettie2022}
Seth Pettie.
\newblock Personal communication, {Jan} 2022.

\bibitem{razborov1992}
A.~A. Razborov.
\newblock On the distributional complexity of disjointness.
\newblock {\em Theor. Comput. Sci.}, 106(2):385--390, {Dec.} 1992.

\bibitem{roditty2012replacement}
Liam Roditty and Uri Zwick.
\newblock Replacement paths and k simple shortest paths in unweighted directed
  graphs.
\newblock {\em ACM Transactions on Algorithms (TALG)}, 8(4):1--11, 2012.

\bibitem{rozhon22sssp}
V{\'{a}}clav Rozhon, Bernhard Haeupler, Anders Martinsson, Christoph Grunau,
  and Goran Zuzic.
\newblock Parallel breadth-first search and exact shortest paths and stronger
  notions for approximate distances.
\newblock In {\em Proceedings of the 55th Annual {ACM} Symposium on Theory of
  Computing, {STOC} 2023}, pages 321--334, Orlando, FL, USA, 2023. {ACM}.

\bibitem{sarma2012distributed}
Atish~Das Sarma, Stephan Holzer, Liah Kor, Amos Korman, Danupon Nanongkai,
  Gopal Pandurangan, David Peleg, and Roger Wattenhofer.
\newblock Distributed verification and hardness of distributed approximation.
\newblock {\em SIAM Journal on Computing}, 41(5):1235--1265, 2012.

\bibitem{williams2010subcubic}
Virginia~Vassilevska Williams and R.~Ryan Williams.
\newblock Subcubic equivalences between path, matrix, and triangle problems.
\newblock {\em J. {ACM}}, 65(5):27:1--27:38, 2018.

\bibitem{yen1971finding}
Jin~Y Yen.
\newblock Finding the k shortest loopless paths in a network.
\newblock {\em Management science}, 17(11):712--716, 1971.

\end{thebibliography}

\end{document}